\documentclass{lmcs}
\pdfoutput=1

\usepackage{lastpage}
\lmcsdoi{15}{2}{16}
\lmcsheading{}{\pageref{LastPage}}{}{}%
{May~29,~2017}{May~23,~2019}{}

%
%
%
%



\usepackage{tikz}

\usepackage{amsthm}
\usepackage{amsmath}
\usepackage{amsfonts}
\usepackage{amssymb}
\usepackage[shortcuts]{extdash}

\newcommand{\newthm}[2]{\newtheorem{#1}[thm]{#2}}

\theoremstyle{plain}
\newtheorem{theorem}[thm]{Theorem}
  \newthm{lemma}{Lemma}
  \newthm{corollary}{Corollary}
  \newthm{proposition}{Proposition}
\theoremstyle{definition}
  \newthm{definition}{Definition}
  \newthm{problem}{Problem}
  \newthm{example}{Example}
  \newthm{remark}{Remark}

\usepackage{xspace}

\usetikzlibrary{automata,positioning,calc,arrows}
\usetikzlibrary{arrows.meta}

\tikzset{every picture/.append style={initial text={}}}

\tikzset{every loop/.append style={->,-{Latex[length=2.5mm]}}}

\tikzset{every initial by arrow/.style={->,-{Latex[length=2.5mm]}, initial distance=1.3}}

\tikzstyle{trans}=[draw,-{Latex[length=2.5mm]},auto]

\tikzstyle{dot} = [draw,shape=circle,fill, minimum size=1.5mm, inner sep=0pt,outer sep=0pt]

\tikzstyle{acc} = [draw=black]
\tikzstyle{tra} = [draw=black, dashed]

\tikzstyle{idf} = [scale=0.8, pos=0.65]

\tikzset{every picture/.append style={initial text={}}}

\newcommand{\dxs}{2}


\newtheorem{claim}[thm]{Claim}
\newtheorem{subclaim}[thm]{Subclaim}

\bibliographystyle{plain}

%
%
%
%

\newcommand{\treeShapedQDag}{


  
  \foreach \y in {0,...,3} {
    \foreach \x in {0,...,7} {
     	\node[dot] at (\dxs*\x, -\y) {};
    }
    
    \node at (\dxs*7.5, -\y) {$\cdots$};
  }
  
  \path[tra] (0*\dxs, -0) -- ++(\dxs, -0);
  \path[acc] (0*\dxs, -0) -- ++(\dxs, -1);
  
  \path[acc] (1*\dxs, -0) -- ++(\dxs, -0);
  \path[acc] (1*\dxs, -0) -- ++(\dxs, -1);
  \path[acc] (1*\dxs, -1) -- ++(\dxs, -1);
  
  \path[acc] (2*\dxs, -0) -- ++(\dxs, -0);
  \path[tra] (2*\dxs, -0) -- ++(\dxs, -1);
  \path[tra] (2*\dxs, -0) -- ++(\dxs, -2);
  \path[acc] (2*\dxs, -1) -- ++(\dxs, -2);
  
  \path[acc] (3*\dxs, -0) -- ++(\dxs, -0);
  \path[tra] (3*\dxs, -2) -- ++(\dxs, -0);
  \path[acc] (3*\dxs, -2) -- ++(\dxs, -1);
  
  \path[acc] (4*\dxs, -0) -- ++(\dxs, -0);
  \path[acc] (4*\dxs, -2) -- ++(\dxs, -0);
  
  \path[tra] (5*\dxs, -2) -- ++(\dxs, +1);
  \path[acc] (5*\dxs, -2) -- ++(\dxs, -1);
  \path[acc] (5*\dxs, -2) -- ++(\dxs, +2);
  
  \path[tra] (6*\dxs, -0) -- ++(\dxs, -2);
  \path[tra] (6*\dxs, -1) -- ++(\dxs, +1);
  \path[acc] (6*\dxs, -1) -- ++(\dxs, -0);
  \path[acc] (6*\dxs, -3) -- ++(\dxs, -0);
}

\newcommand{\qDagFigure}{

  
  \foreach \y in {0,...,3} {
    \foreach \x in {0,...,7} {
     	\node[dot] at (\dxs*\x, -\y) {};
    }
    
    \node at (\dxs*7.5, -\y) {$\cdots$};
  }
  
  \path[tra] (0*\dxs, -0) -- ++(\dxs, -0);
  \path[acc] (0*\dxs, -0) -- ++(\dxs, -1);
  
  \path[acc] (1*\dxs, -0) -- ++(\dxs, -0);
  \path[acc] (1*\dxs, -0) -- ++(\dxs, -1);
  \path[tra] (1*\dxs, -1) -- ++(\dxs, -0);
  \path[acc] (1*\dxs, -1) -- ++(\dxs, -1);
  
  \path[acc] (2*\dxs, -0) -- ++(\dxs, -0);
  \path[tra] (2*\dxs, -0) -- ++(\dxs, -1);
  \path[tra] (2*\dxs, -0) -- ++(\dxs, -2);
  \path[acc] (2*\dxs, -1) -- ++(\dxs, -0);
  \path[acc] (2*\dxs, -1) -- ++(\dxs, -2);
  \path[acc] (2*\dxs, -2) -- ++(\dxs, -0);
  
  \path[acc] (3*\dxs, -0) -- ++(\dxs, -0);
  \path[tra] (3*\dxs, -1) -- ++(\dxs, +1);
  \path[tra] (3*\dxs, -2) -- ++(\dxs, -0);
  \path[acc] (3*\dxs, -2) -- ++(\dxs, -1);
  \path[acc] (3*\dxs, -3) -- ++(\dxs, -0);
  
  \path[acc] (4*\dxs, -0) -- ++(\dxs, -0);
  \path[tra] (4*\dxs, -0) -- ++(\dxs, -2);
  
  \path[acc] (4*\dxs, -2) -- ++(\dxs, -0);
  \path[tra] (4*\dxs, -3) -- ++(\dxs, +1);
  
  \path[acc] (5*\dxs, -0) -- ++(\dxs, -0);
  \path[tra] (5*\dxs, -0) -- ++(\dxs, -1);
  \path[tra] (5*\dxs, -2) -- ++(\dxs, +1);
  \path[acc] (5*\dxs, -2) -- ++(\dxs, -1);
  \path[acc] (5*\dxs, -2) -- ++(\dxs, +2);
  
  \path[acc] (6*\dxs, -0) -- ++(\dxs, -0);
  \path[tra] (6*\dxs, -0) -- ++(\dxs, -2);
  \path[tra] (6*\dxs, -1) -- ++(\dxs, +1);
  \path[acc] (6*\dxs, -1) -- ++(\dxs, -0);
  \path[acc] (6*\dxs, -3) -- ++(\dxs, -0);
  \path[acc] (6*\dxs, -3) -- ++(\dxs, +1);
    

  
  \newcommand{\hors}{18}
  
  \foreach \x in {0,...,1} {
  	\foreach \y in {0,...,3} {
     	\node[dot] at (\hors + \dxs*\x, -\y) {};
    }
  }
  
  \path[tra] (\hors + 0*\dxs, -0) -- ++(\dxs, -0);
  \path[acc] (\hors + 0*\dxs, -0) -- ++(\dxs, -1);
  \path[acc] (\hors + 0*\dxs, -0) -- ++(\dxs, -3);
  \path[acc] (\hors + 0*\dxs, -1) -- ++(\dxs, -2);
  \path[tra] (\hors + 0*\dxs, -2) -- ++(\dxs, +1);
  \path[tra] (\hors + 0*\dxs, -2) -- ++(\dxs, -1);
}

\newenvironment{proofof}[1]
	{\vspace{1ex}\noindent{\it Proof of #1.}\hspace{0.5em}} 
    {\hfill\qed\vspace{1ex}}

\newcommand{\comment}[1]{}

\newcommand{\init}{\mathrm{I}}

\newcommand{\infpow}{\infty}

\newcommand{\trNon}{{0}}
\newcommand{\trYes}{{1}}
\newcommand{\trAcc}{{\star}}

\newcommand{\fun}[3]{#1\colon #2\to #3}

\newcommand{\trans}[1]{\overset{#1}{\longrightarrow}}

\newcommand{\infA}{\alpha}
\newcommand{\infB}{\beta}
\newcommand{\patA}{\pi}
\newcommand{\colA}{C}
\newcommand{\homA}{h}

\newcommand{\mathcalsym}[1]{\ensuremath{\mathcal{#1}}\xspace}

\newcommand{\Aa}{\mathcalsym{A}}
\newcommand{\Bb}{\mathcalsym{B}}

\newcommand{\Jj}{\mathcalsym{J}}

\newcommand{\Rr}{\mathcalsym{R}}

\newcommand{\Tt}{\mathcalsym{T}}

\theoremstyle{plain}
\newtheorem*{rep@theorem}{\rep@title}
\newcommand{\newreptheorem}[2]{%
\newenvironment{rep#1}[1]{%
\def\rep@title{#2 \ref{##1}}%
\begin{rep@theorem}}%
{\end{rep@theorem}}}

\newreptheorem{theorem}{Theorem}
\newreptheorem{lemma}{Lemma}
\newreptheorem{proposition}{Proposition}
\newreptheorem{conjecture}{Conjecture}
\newreptheorem{claim}{Claim}

%
%
%
%

\newcommand{\ind}{\textrm{-}\mathsf{IND}}

\newcommand{\rca}{\mathsf{RCA}_0}
\newcommand{\wkl}{\mathsf{WKL}_0}
\newcommand{\rt}{\mathsf{RT}}
\newcommand{\srt}{\mathsf{SRT}}

\newcommand{\tuple}[1]{\langle #1 \rangle}


%
%
%
%

\newcommand{\mso}{\ensuremath{\mathrm{MSO}}\xspace}
\newcommand{\GrJ}{\mathrel{\mathcal{J}}}
\newcommand{\GrH}{\mathrel{\mathcal{H}}}
\newcommand{\GrL}{\mathrel{\mathcal{L}}}
\newcommand{\GrR}{\mathrel{\mathcal{R}}}

%
%
%
%

\usepackage{stmaryrd}
\newcommand{\lbracket}{{\langle}}
\newcommand{\rbracket}{{\rangle}}

\newcommand{\godelize}[1]{\lceil #1 \rceil}

\newcommand{\bnfalt}{\; | \;}
\newcommand{\Sing}{\texttt{Sing}}

\newcommand{\eqdef}{\mathrel{\mathop:}=}


%
%
%
%

\usepackage{todonotes}
\newcommand{\Nn}{\mathbb{N}}

%
%
%
%




\keywords{nondeterministic automata, monadic second-order logic, B\"uchi's theorem, additive Ramsey's theorem, reverse mathematics}

\hyphenation{}

\hyphenation{de-ci-da-bi-li-ty}
\hyphenation{com-pu-ta-tio-ns}
  
\begin{document}

\title[The logical strength of~B\"uchi's decidability theorem]{The logical strength of~B\"uchi's decidability theorem}
\titlecomment{{\lsuper*}This article is an~extended journal version of an equally named paper from CSL 2016.}

\author[L.~A.~Ko{\l}odziejczyk]{Leszek A. Ko{\l}odziejczyk\rsuper{a}}	
\address{\lsuper{a}University of Warsaw, Faculty of Mathematics, Informatics, and Mechanics\\
  Banacha 2, 02-097 Warsaw, Poland}	
\email{lak@mimuw.edu.pl}  
\email{henrykm@mimuw.edu.pl}
\email{mskrzypczak@mimuw.edu.pl}
\thanks{The first author has been partially supported by Polish National Science Centre grant no.\ 2013/09/B/ST1/04390; the remaining three authors have been supported by Polish National Science Centre grant no.\ 2014-13/B/ST6/03595.}

\author[H.~Michalewski]{Henryk Michalewski\rsuper{a}}	

\author[C.~Pradic]{C\'ecilia Pradic\rsuper{{a,b}}}	
\address{\lsuper{b}ENS Lyon, 46 all\'ee d'Italie, 69007 Lyon, France}	

\author[M.~Skrzypczak]{Micha{\l} Skrzypczak\rsuper{a}}	

\begin{abstract}
We study the strength of axioms needed to prove various results related to automata on infinite words and B\"uchi's theorem on the decidability of the \mso theory of $(\Nn, {\le})$. We prove that the following are equivalent over the weak second-order arithmetic theory $\rca$:

\begin{enumerate}
\item the induction scheme for $\Sigma^0_2$ formulae of arithmetic,
\item a variant of Ramsey's Theorem for pairs restricted to so\=/called \emph{additive colourings},
\item B\"uchi's complementation theorem for nondeterministic automata on infinite words,
\item the decidability of the depth-$n$ fragment of the \mso theory of $(\Nn, {\le})$, for each $n \ge 5$,
\end{enumerate}

Moreover, each of (1)--(4) implies McNaughton's determinisation theorem for automata on infinite words, as well as the ``bounded-width'' version of K\"onig's Lemma, often used in proofs of McNaughton's theorem.
\end{abstract}

\maketitle

\section{Introduction}

B\"uchi's theorem~\cite{buchi1962} states that the monadic second-order theory of $(\Nn, {\le})$ is decidable. This is one of the  fundamental results on the decidability of logical theories, and no less fundamental are the methods developed in order to prove it.

Typical proofs of B\"uchi's theorem  make use of automata on infinite words. B\"uchi's original argument involved obtaining a complementation theorem for nondeterministic word automata: for each such automaton $\Aa$, there is another automaton $\Bb$ which accepts a given word exactly if $\Aa$ does not. Thanks to the complementation theorem, an \mso formula can be inductively translated into an equivalent nondeterministic automaton. At that point, checking satisfiability of the formula becomes a matter of elementary combinatorics. Another approach to decidability of \mso was presented by Shelah in~\cite{shelah1975}. Shelah's ``composition method'' is automata-free, but is similar to B\"uchi's proof in one important respect: both use a restricted form of Ramsey's Theorem. 

McNaughton~\cite{mcn1966} showed that an infinite word automaton can be determinised, though at the cost of allowing automata with a more general acceptance condition than B\"uchi's.
Since deterministic automata are easy to complement, this again gives the translation of formulae to automata and thus decidability of \mso. To the best of our knowledge all determinisation proofs known from the literature rely on either a restricted form of Ramsey's Theorem or a restricted form of K\"onig's Lemma. The former appears for example in the algebraic approach described in~\cite{PinPerrin}; the latter is used here, see Section~\ref{sec:ind-to-det}.


It is natural to ask how the various proofs of B\"uchi's theorem and related results compare to one another. For instance, is determinisation of word automata an ``essentially stronger'' result than complementation? Also, is the use of mildly nonconstructive principles \`a la Ramsey or K\"onig unavoidable?

A convenient framework for studying questions of this sort is provided by the programme of \emph{reverse mathematics}~\cite{simpson}. The idea is to compare various theorems as formalised in the very expressive language of an axiomatic theory known as \emph{second-order arithmetic}. Typical subtheories of second-order arithmetic are axiomatised by principles asserting the existence of more or less complicated sets of natural numbers. An important example is the relatively weak theory $\rca$, which guarantees only the existence of decidable sets. $\rca$ can formalise a significant amount of everyday mathematics and prove the termination of any primitive recursive algorithm, but it is unable to prove the existence of noncomputable objects such as the homogeneous sets postulated by Ramsey's Theorem or the infinite branches postulated by K\"onig's Lemma. Sometimes it is possible to show that two theorems not provable in $\rca$ are provably equivalent in it, and thus neither theorem is logically stronger than the other in the sense of requiring more abstract or less constructive sets. It is also often the case that a set existence principle used to derive some theorem is actually implied by the theorem over $\rca$. 
This serves as evidence that the principle is in fact necessary to prove the theorem. 

In this paper, we carry out a reverse-mathematical study of the results around B\"uchi's theorem. 
We have two main aims in mind. One is to compare complementation, determinisation, and decidability of \mso in terms of logical strength. The other aim is to clarify the role of Ramsey's Theorem and K\"onig's Lemma in proofs of B\"uchi's theorem and the related facts about automata. This seems interesting in light of the fact that the usual formulation of Ramsey's Theorem for pairs and the so-called Weak K\"onig's Lemma (the form of K\"onig's Lemma most commonly needed in practice) are known to be incomparable over $\rca$~\cite{hirst:thesis, liu}.

Our findings are as follows: firstly, determinisation of infinite word automata is no stronger than complementation, at least in the sense of implication over $\rca$. Secondly, decidability of \mso over $(\Nn,{\le})$ implies both complementation and determinisation. 
Finally, the use of Ramsey- or K\"onig-like principles in proofs of B\"uchi's theorem is mostly spurious in the sense that the versions that are actually needed follow from a very limited set-existence principle, namely mathematical induction for properties expressed by $\Sigma^0_2$ formulae. More precisely, we prove:

\pagebreak
\begin{theorem}
\label{thm:main}
Over $\rca$, the following statements are equivalent:
\begin{enumerate}[align=left,ref=(\arabic*)]
\item the principle of mathematical induction for properties expressed using $\Sigma^0_2$ formulae (denoted $\Sigma^0_2\ind$),
\label{it:induction}
\item Additive Ramsey's Theorem (see Definition~\ref{def:additive-ramsey}),
\label{it:ramsey}
\item effective complementation for B\"uchi automata: there exists an algorithm which for each nondeterministic B\"uchi automaton $\Aa$ outputs a B\"uchi automaton $\Bb$ such that for every infinite word $\infA$, $\Bb$ accepts $\infA$ exactly if $\Aa$ does not accept $\infA$,
\label{it:compl_buchi}
\item[(3')] complementation for B\"uchi automata: for every nondeterministic B\"uchi automaton $\Aa$ there exists a B\"uchi automaton $\Bb$ such that for every infinite word $\infA$, $\Bb$ accepts $\infA$ exactly if $\Aa$ does not accept $\infA$,
\item the decidability of the depth-$n$ fragment of the $\mso$ theory of $(\Nn, {\le})$ (where $n \geq 5$ is a natural number)\footnote{The restriction to fixed-depth fragments is a technicality related to undefinability of truth. This is explained in more detail in Section~\ref{sec:intro_automata}.}.
\label{it:dec_mso}
\end{enumerate}
Furthermore, each of~\ref{it:induction}--\ref{it:dec_mso} implies:
\begin{enumerate}[resume*]
\item determinisation of B\"uchi automata: there exists an algorithm which for each nondeterministic B\"uchi automaton $\Aa$
outputs a deterministic Rabin automaton $\Bb$ such that for every infinite word $\infA$, $\Bb$ accepts $\infA$ exactly if $\Aa$ accepts $\infA$.
\label{it:det_buchi}
\end{enumerate}
\end{theorem}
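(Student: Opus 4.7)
The plan is to establish the equivalence of (1)--(4) via the cycle $(1) \Rightarrow (2) \Rightarrow (3) \Rightarrow (4) \Rightarrow (1)$, and then derive $(1) \Rightarrow (5)$ as a separate consequence. The forward implications follow well-known classical arguments carefully formalised within $\rca$; the reversal $(4) \Rightarrow (1)$ is the genuinely delicate direction that justifies the depth bound $n \geq 5$; and $(1) \Rightarrow (5)$ is a McNaughton-style determinisation carried out within $\rca$ using the Additive Ramsey Theorem obtained from (1) via (2).

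\textbf{Forward implications $(1)\Rightarrow(2)\Rightarrow(3)\Rightarrow(4)$.} For $(1)\Rightarrow(2)$, I would derive the Additive Ramsey Theorem from $\Sigma^0_2\ind$ by exploiting the semigroup structure of the colouring: one isolates an idempotent colour $e$ appearing cofinally in some distinguished sequence and then invokes $\Sigma^0_2$-induction to stabilise the inductive construction of an infinite $e$-monochromatic set. For $(2)\Rightarrow(3)$, the key tool is the \emph{transition-profile} semigroup of a B\"uchi automaton $\Aa$: each finite factor induces an element encoding, for every pair of states $(p,q)$, whether a run from $p$ to $q$ exists and whether it visits an accepting state. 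Additive Ramsey applied to this colouring of pairs yields a factorisation of any infinite word into blocks with uniform profile, and the complement automaton $\Bb$ is built so as to guess and verify such a factorisation, accepting precisely when no accepting run of $\Aa$ exists. For $(3)\Rightarrow(4)$, I would perform the standard translation of \mso formulae of depth at most $n$ into B\"uchi automata by induction on formula structure: atomic predicates give small automata, disjunction and existential set quantification correspond to union and projection, and negation is handled by~(3); decidability then reduces to the emptiness problem for B\"uchi automata, which is provable directly in $\rca$.

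\textbf{Reversal $(4)\Rightarrow(1)$.} This is the hard direction. I would encode a putative counterexample to $\Sigma^0_2\ind$---a $\Sigma^0_2$ formula $\psi(x)$ with $\psi(0)$, $\forall x(\psi(x) \to \psi(x+1))$, and yet $\neg\psi(m)$ for some $m$---as an \mso sentence of depth at most $5$ over $(\Nn,\le)$ whose truth is decided by the assumed algorithm. The depth budget has to accommodate one quantifier alternation needed for the underlying $\Sigma^0_2$ pattern together with a few further levels for the order-theoretic coding of arithmetic operations and the relativisation to the parameters of $\psi$. Running the decision procedure on this sentence then either directly extracts the witness that $\Sigma^0_2\ind$ demands or produces a contradiction with the failure of induction.

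\textbf{Determinisation $(1)\Rightarrow(5)$ and main obstacle.} For (5), I would run a Safra-/McNaughton-style construction on the tree of candidate runs of $\Aa$: the factorisation provided by Additive Ramsey ensures that acceptance of an infinite run can be certified block-by-block, and the standard tree bookkeeping gives a Rabin condition defining the same language as $\Aa$. The principal technical obstacle is the calibration of the depth bound in $(4)\Rightarrow(1)$: the forward implications largely formalise classical proofs and require careful bookkeeping rather than new ideas, whereas pinning down the minimal \mso depth that captures $\Sigma^0_2\ind$ requires a precise count of quantifier alternations both in the \mso encoding of arithmetic and in the translation of $\Sigma^0_2$ formulae, and this is where the specific constant $5$ arises.
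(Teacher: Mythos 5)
Your overall decomposition (the cycle $(1)\Rightarrow(2)\Rightarrow(3)\Rightarrow(4)\Rightarrow(1)$ plus $(1)\Rightarrow(5)$) and the forward implications $(1)\Rightarrow(2)\Rightarrow(3)\Rightarrow(4)$ match the paper's route. But your sketch of $(4)\Rightarrow(1)$ contains a genuine gap. You propose to ``encode a putative counterexample to $\Sigma^0_2\ind$ as an \mso sentence of depth at most $5$'' and to run the decision procedure on it. This cannot work as stated: a $\Sigma^0_2$ induction instance is an arithmetic formula involving $+$, $\cdot$ and second-order parameters, none of which are expressible in a \emph{closed sentence} of $\mso(\Nn,{\le})$, and the decidability hypothesis only concerns closed sentences of that theory (there is no mechanism for ``relativisation to the parameters of $\psi$,'' and the decision procedure returns a bit, it does not ``extract witnesses''). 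The paper's argument is structurally different and avoids exactly this obstacle: it fixes a \emph{parameter-free} family of true depth-$5$ sentences $\psi_k$ (``every word over $\{0,\dots,k\}$ has a maximal letter occurring infinitely often''), observes that decidability turns ``$\psi_k$ belongs to the theory'' into a $\Delta^0_1$ predicate $\Psi(k)$ on which $\Sigma^0_1\ind$ (available in $\rca$) can be run to conclude $\psi_{a+1}$ is true, and only then injects the arithmetic content by instantiating the set quantifiers of $\psi_{a+1}$ with a specific $\Delta^0_1$-definable infinite word whose letter frequencies track the partial progress of the $\Pi^0_2$ induction (Claim~\ref{cla:correspondence}). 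The constant $5$ is just the depth of $\psi_k$; it has nothing to do with ``coding arithmetic operations.''

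A secondary issue concerns $(1)\Rightarrow(5)$: you claim the Additive Ramsey factorisation certifies acceptance ``block-by-block'' in a Safra-style construction, but the paper's determinisation (following Muller--Schupp) does not use Ramsey at all here. Its combinatorial engine is Bounded-width K\"onig's Lemma, which must first be derived from $\Sigma^0_2\ind$ (Section~\ref{sec:ind-to-konig}), and a further application of $\Sigma^0_2\ind$ is needed to show that the graph to which the lemma is applied is $\Delta^0_1$-definable (the $\Pi^0_1$/$\Sigma^0_1$ double definition of the set of $\patA$-merging vertices). Your sketch omits both the K\"onig-type principle and the definability issue, which are precisely the points where the formalisation in $\rca+\Sigma^0_2\ind$ is delicate.
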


\newcommand{\refComplBuchiPrime}{{(3')}\xspace}

We also give a precise statement of the bounded-width form of K\"onig's Lemma often used in proofs of~\ref{it:det_buchi}, and show that it is implied by each of~\ref{it:induction}--\ref{it:dec_mso}. Interestingly, it is not clear if~\ref{it:det_buchi} implies~\ref{it:induction}--\ref{it:dec_mso} over $\rca$: standard arguments used to complement deterministic automata with acceptance conditions other than B\"uchi seem to involve $\Sigma^0_2\ind$.
%

\begin{remark}
Our proof of the implication $\ref{it:ramsey}\to \ref{it:compl_buchi}$ 
actually shows that Additive Ramsey's Theorem implies the correctness of the standard complementation algorithm known from the literature. In the light of the equivalence between $\ref{it:compl_buchi}$ and $\refComplBuchiPrime$, this means that the correctness of the standard algorithm is already implied by the seemingly weaker statement that complementation of a~B\"uchi automaton is always possible. Also, the algorithm we use in our proof of~\ref{it:det_buchi} is a typical one.
\end{remark}

It follows from our results that B\"uchi's theorem is unprovable in $\rca$, but only barely: it is true in computable mathematics, in the sense that the theorem remains valid if all the set quantifiers are restricted to range over (exactly) the decidable subsets of $\Nn$. This is in stark contrast to the behaviour of Rabin's theorem on the decidability of \mso on the infinite binary tree, which is known to require the existence of extremely complicated noncomputable sets~\cite{km:2016}. Also Additive Ramsey's Theorem and Bounded-width K\"onig's Lemma are true in computable mathematics---quite unlike more general forms of Ramsey's Theorem for pairs and K\"onig's Lemma~\cite{jockusch:ramsey,kreisel:variant}.

To prove the implication $\ref{it:dec_mso} \to \ref{it:induction}$ of Theorem~\ref{thm:main}, we come up with a family of \mso sentences for which truth in $(\Nn,{\le})$ is undecidable if $\Sigma^0_2\ind$ fails. The other implications are proved by formalising more or less standard arguments from automata theory. In some cases this is routine, but especially the proof of $\ref{it:induction}\to \ref{it:det_buchi}$ is quite delicate: we check not only that $\Sigma^0_2\ind$ implies Bounded-width K\"onig's Lemma, but also that constructing the objects to which we apply the lemma is within the means of $\rca + \Sigma^0_2\ind$.

The structure of the paper is as follows. Sections~\ref{sec:intro_rev_math} and~\ref{sec:intro_automata} discuss the necessary background on reverse mathematics, automata, and \mso. The sections that follow contain proofs of various implications which jointly establish the
equivalence of $\ref{it:induction}$--$\ref{it:dec_mso}$. 
We prove 
$\ref{it:induction}\to \ref{it:ramsey}$ in Section~\ref{sec:ind-to-ramsey}, 
$\ref{it:ramsey}\to \ref{it:compl_buchi}$  in Section~\ref{sec:ramsey-to-compl},
$\ref{it:compl_buchi}\to \ref{it:dec_mso}$ in Section~\ref{sec:compl-to-dec},
$\ref{it:dec_mso} \to \ref{it:induction}$ in Section~\ref{sec:dec-to-ind},
and $\refComplBuchiPrime \to \ref{it:induction}$ in Section~\ref{sec:two-compl}.

The next two sections contain some supplementary results. In Section~\ref{ap:ramsey-to-ind}, we give a direct proof of 
$\ref{it:induction}$  from $\ref{it:ramsey}$ and also from a related form of Ramsey's Theorem.
Then in Section~\ref{sec:ind-to-wilke} we show that the Wilke
extension principle, an algebraic statement that is also sometimes used to prove decidability of \mso,
is also equivalent to $\ref{it:induction}$--$\ref{it:dec_mso}$.

We then turn our attention to the determinisation theorem for word automata, i.e.,~$\ref{it:det_buchi}$.
Section~\ref{sec:ind-to-konig} provides a~combinatorial proof that $\Sigma^0_2\ind$ implies Bounded\=/width K\"onig's Lemma, 
which is then applied to prove $\ref{it:induction}\to \ref{it:det_buchi}$ in Section~\ref{sec:ind-to-det}. 
A~short Section~\ref{sec:conclusion} contains some concluding remarks and open problems.


\section{Background on reverse mathematics}
\label{sec:intro_rev_math}


\emph{Reverse mathematics}~\cite{simpson} is a framework for studying the strength of axioms needed to prove theorems of countable mathematics, that is, the part of mathematics concerned with objects that can be represented using no more than countably many bits of information. This encompasses the vast majority of the mathematics needed in computer science. 

The basic idea of reverse mathematics is to analyse mathematical theorems in terms of subsystems of a strong axiomatic theory known as second-order arithmetic. The two-sorted language of second-order arithmetic, $L_2$, contains \emph{first-order} variables $x, y, z, \ldots$ (or $i,j,k,\ldots$), intended to range over natural numbers, and \emph{second-order} variables $X,Y,Z,\ldots$, intended to range over sets of natural numbers. $L_2$ includes the usual arithmetic functions and relations ${+}, {\cdot}, {\le}, 0,1$ on the first-order sort, and the ${\in}$ relation which has one first-order and one second-order argument.
The intended model of $L_2$ is $(\omega, \mathcal{P}(\omega))$.

\subsection*{Notational convention.} From this point onwards, we will use the letter $\Nn$ to denote the natural numbers as formalised in second-order arithmetic, that is, the domain of the first-order sort. On the other hand, the symbol $\omega$ will stand for the concrete, or standard, natural numbers. For instance, given a theory $\mathrm{T}$ and a formula $\varphi(x)$, ``$\mathrm{T}$ proves $\varphi(n)$ for all $n \! \in \! \omega$'' will mean ``$\mathrm{T} \vdash \varphi(0), \mathrm{T} \vdash \varphi(1), \ldots$'', which does not imply $\mathrm{T} \vdash \forall x\!\in\!\Nn.\,\varphi(x)$.

\bigskip
\noindent\emph{Full second-order arithmetic}, $\mathsf{Z}_2$, has axioms of three types: (i) axioms stating that the first-order sort is the non-negative part of a discretely ordered ring; (ii) comprehension axioms, or sentences of the form \[\forall \bar Y \, \forall \bar y\, \exists X\, \forall x\, \big(x\in X \Leftrightarrow \varphi(x, \bar Y, \bar y)\big),\] where $\varphi$ is an arbitrary formula of $L_2$ not containing the variable $X$; (iii) the induction axiom, 
\[\forall X\, \big[0 \in X \land \forall x\, (x \in X \Rightarrow x+1 \in X)\Rightarrow \forall x.\, x\in X \big].\]

The language $L_2$ is very expressive: already in weak fragments of $\mathsf{Z}_2$, the first-order sort can be used to encode arbitrary finite objects and the second-order sort can encode even such objects as complete separable metric spaces, continuous functions between them, and Borel sets within them (cf.~\cite[Chapters~II.5, II.6, and~V.3]{simpson}). Moreover, the theory $\mathsf{Z}_2$ is powerful enough to prove almost all theorems from a typical undergraduate course that are expressible in $L_2$. In fact, the basic observation underlying reverse mathematics~\cite{simpson} is that many important theorems are equivalent to various fragments of $\mathsf{Z}_2$, where the equivalence is proved in some specific weaker fragment, referred to as the \emph{base theory}. 

\subsection*{Quantifier hierarchies.} Typical fragments of $\mathsf{Z}_2$ are defined in terms of well\=/known quantifier hierarchies whose definitions we now recall. A formula is $\Sigma^0_n$ if it has the form $\exists \bar{x}_1\, \forall\bar{x}_2\, \ldots\, \mathsf{Q}\bar{x}_n.\, \psi$, where the $\bar x_i$'s are blocks of first-order variables, the shape of $\mathsf{Q}$ depends on the parity of $n$, and $\psi$ is $\Delta^0_0$, i.e.~contains only bounded first\=/order quantifiers. A formula is $\Pi^0_n$ if it is the negation of a $\Sigma^0_n$ formula. A formula is \emph{arithmetical} if it contains only first-order quantifiers (second-order parameters are allowed). 

A formula is $\Sigma^1_n$ if it has the form $\exists \bar{X}_1\, \forall\bar{X}_2\, \ldots \mathsf{Q}\bar{X}_n.\, \psi$, where the $\bar X_i$'s are blocks of second-order variables, the shape of $\mathsf{Q}$ depends on the parity of $n$, and $\psi$ is arithmetical. A formula is $\Pi^1_n$ if it is the negation of a $\Sigma^1_n$ formula.

In practice, we say that a formula is $\Sigma^i_n/\Pi^i_n$ if it  equivalent to a $\Sigma^i_n/\Pi^i_n$ formula 
in the axiomatic theory we are working in at a given point.

\subsection*{Definition of $\rca$.} The usual base theory in reverse mathematics is $\rca$, which guarantees only the existence of decidable sets. $\rca$ is defined by restricting the comprehension scheme to \emph{$\Delta^0_1$}-comprehension, which takes the form:
\[\forall \bar Y \, \forall \bar y\, \left[\forall x\, (\varphi(x,\bar Y, \bar y) \Leftrightarrow \neg\psi(x,\bar Y, \bar y)) \Rightarrow \exists X\, \forall x\, (x\in X \Leftrightarrow \varphi(x, \bar Y, \bar y)) \right],\]
where both $\varphi$ and $\psi$ are $\Sigma^0_1$ and do not contain $X$. For technical reasons, it is necessary to strengthen the induction axiom to $\Sigma^0_1\ind$, that is, the axiom scheme consisting of the sentences
\[\forall \bar Y \, \forall \bar y\, \left[\varphi(0,\bar Y, \bar y) \land \forall x\,\left(\varphi(x,\bar Y, \bar y) \Rightarrow \varphi(x+1,\bar Y, \bar y)\right)
\Rightarrow \forall x.\, \varphi(x,\bar Y, \bar y) \right]\]
for $\varphi$ in $\Sigma^0_1$.  The scheme $\Sigma^0_1\ind$ makes it possible to define sequences by primitive recursion (cf.~\cite[Theorem~II.3.4]{simpson}): given some $x_0$ and a function $\fun{f}{\Nn}{\Nn}$, $\rca$ proves that there is a unique sequence $(x_i)_{i \in \Nn}$ such that $x_{i+1}=f(x_i)$ for each $i$.

$\rca$ has a unique minimal model in the sense of embeddability. This minimal model is $(\omega, \mathrm{Dec})$, where $\mathrm{Dec}$ is the family of decidable subsets of $\omega$.

\subsection*{The $\Sigma^0_n\ind$ scheme.}

In this paper we study an extension of $\rca$ obtained by strengthening the induction scheme to $\Sigma^0_2$ formulae. In general, 
for $n \! \in \! \omega$, the axiom scheme $\Sigma^0_n\ind$ is defined like $\Sigma^0_1\ind$, but with the induction formula $\varphi$ allowed to be 
$\Sigma^0_n$ rather than just $\Sigma^0_1$.  For each $n$, $\rca + \Sigma^0_n\ind$ is equivalent to $\rca + \Pi^0_n\ind$, where the latter is defined in the natural way, as well as to the least number principle for $\Sigma^0_n$ or $\Pi^0_n$ formulae (cf.~\cite[Chapter~II.3]{simpson}). 

Two important principles provable from $\Sigma^0_n\ind$ are \emph{$\Sigma^0_n$-collection}:
\[\forall \bar Z \, \forall \bar z\,
\big[\forall x \! \le \! t \,\exists y.\,  \varphi(x,y,\bar Z, \bar z)\big]
\Rightarrow \exists w\, \forall x \! \le \! t \, \exists y \! \le \! w.\, \varphi(x,y,\bar Z, \bar z),\]
for $\varphi$ in $\Sigma^0_n$, and \emph{bounded $\Sigma^0_n$-comprehension}:
\[\forall \bar Y \, \forall \bar y\, \forall w \, \exists X\, \forall x\, (x\in X \Leftrightarrow x \le w \land \varphi(x, \bar Y, \bar y)),\]
for $\varphi$ in $\Sigma^0_n$. The combination of the two yields \emph{strong $\Sigma^0_n$-collection}:
\[\forall \bar Z \, \forall \bar z\, \forall t\, \exists w\, \forall x \! \le \! t \,
\big[\exists y. \, \varphi(x,y,\bar Z, \bar z)\Rightarrow \exists y \! \le \! w.\,\varphi(x,y,\bar Z, \bar z) \big].\]

For each $n$, the theory $\rca + \Sigma^0_{n+1}\ind$ is strictly stronger than $\rca + \Sigma^0_{n}\ind$ (cf.\ e.g. \cite[Theorem IV.1.29]{hp}). 
However, note that the minimal model $(\omega, \mathrm{Dec})$ of $\rca$ satisfies $\rca + \Sigma^0_n\ind$ for all $n$, because an induction axiom is always true in a model with first-order universe $\omega$.

\subsection*{Additive Ramsey and Bounded-width K\"onig.}

Two prominent extensions of $\rca$ are related to weak forms of important nonconstructive set existence principles: K\"onig's Lemma and Ramsey's Theorem.

\emph{Weak K\"onig's Lemma} is the statement: ``for every $k$, every infinite tree contained in $\{0,1,\ldots,k\}^\ast$ has an infinite branch''. The theory obtained by adding this statement to $\rca$ is known as $\wkl$. This is the minimal theory supporting all sorts of ``compactness arguments'' in combinatorics, topology, analysis, and elsewhere (cf.~\cite[Chapter~IV]{simpson}).

The theory $\rt^2_2$ extends $\rca$ by an axiom expressing \emph{Ramsey's Theorem for pairs and two colours}\footnote{By $[X]^2$ we denote the set of unordered pairs of elements of $X$.}: ``for every $2$-colouring of $[\Nn]^2$ there exists an infinite homogeneous set''. $\rt^2_{<\infty}$ is defined similarly but allowing $k$-colourings for each $k \in \Nn$. 

Both $\rt^2_2$ and $\rt^2_{<\infty}$ are known to be incomparable with $\wkl$ in the sense of implication over $\rca$~\cite{hirst:thesis, liu}.
$\wkl$, $\rt^2_2$, and $\rt^2_{<\infty}$ are all false in the minimal model $(\omega, \mathrm{Dec})$ of $\rca$, see~\cite{jockusch:ramsey,kreisel:variant}. Much more on these theories can be found in~\cite{hirschfeldt}.

In this paper, we study specific restricted versions of $\rt^2_{<\infty}$ and $\wkl$  which play a role in proofs of B\"uchi's theorem. Recall that a \emph{semigroup} is a set $S$ with an associative operation $\fun{{*}}{S\times S}{S}$.

\begin{definition}[Additive Ramsey's Theorem]
\label{def:additive-ramsey}
\emph{Additive Ramsey's Theorem} is the following statement: for every finite semigroup $(S, {*})$ and every colouring $\fun{\colA}{[\Nn]^2}{S}$ such that for every $i \! < \! j \! < \! k$ we have $\colA(i,j) * \colA(j, k) = \colA(i, k)$, there exists an infinite homogeneous set $I \subseteq \Nn$. That is, there is a fixed colour $a$ such that for every $(i, j) \in [I]^2$, $\colA(i, j) = a$.
\end{definition}



\comment{
$\rca + \Sigma^0_2\ind$ is too weak to prove Weak K\"onig's Lemma (in fact, $\Sigma^0_2\ind$ and $\wkl$ are incomparable over $\rca$). However, it turns out that $\Sigma^0_2\ind$ proves a restricted version of the lemma, where the ``width'' of the trees under consideration is globally bounded, in the sense that the subtree rooted in 
a vertex $\tuple{i_0,\ldots,i_{\ell}} \in \{0,\ldots,k\}^*$ is completely determined by $\ell$ and $i_\ell$.
}

\begin{definition}[Bounded-width K\"onig's Lemma]
\label{def:bounded-konig}
\emph{Bounded-width K\"onig's Lemma} is the following statement: for every finite set $Q$ and every graph $G$ whose vertices belong to $Q\times\Nn$ and whose edges are all of the form \mbox{$((q,i),(q',i+1))$} for some $q,q'\in Q$, $i\in\Nn$, if there are arbitrarily long finite paths in $G$ starting in some vertex $(q,0)$, then there is an infinite path in $G$ starting in $(q,0)$.
\end{definition}

Notice that Bounded-width K\"onig's Lemma applied to a graph $G$ is essentially the same as Weak K\"onig's Lemma applied to the tree obtained by the so-called unraveling of $G$ (in particular, Bounded-width K\"onig's Lemma is provable in $\wkl$). This tree has globally bounded width in the sense that the number of vertices at each depth is always bounded by $|Q|$. However, we feel that the graph formulation is more natural to express. One of our results (Theorem~\ref{thm:ind-to-konig}) is that while induction is insufficient to prove Weak K\"onig's Lemma (in fact, $\Sigma^0_n\ind$ and $\wkl$ are incomparable over $\rca$ for all $n \ge 2$), the bounded-width variant follows from $\Sigma^0_2\ind$.

Some restrictions of Weak K\"onig's Lemma equivalent to the Bounded-width version have been independently studied in~\cite{sy:very-weak-wkl}.


\section{Background on \texorpdfstring{$\mso$}{MSO} and B\"uchi automata}
\label{sec:intro_automata}

B\"uchi automata and \mso logic are equivalent formalisms for specifying properties of infinite words. In this section we formally introduce these concepts. If not stated otherwise, the formalisation presented here is carried out in $\rca$.

\subsection*{Infinite words}

By $\Sigma$ we denote a~finite nonempty set  called an \emph{alphabet}. A \emph{finite word} over $\Sigma$ is a~function $\fun{w}{\{0,\ldots,k-1\}}{\Sigma}$; the \emph{length} of $w$ is $k$. The set of all finite words over $\Sigma$ is denoted by $\Sigma^\ast$ and the set of non-empty finite words is denoted by $\Sigma^{{+}}$. An \emph{infinite word} over $\Sigma$ is a~function $\fun{\infA}{\Nn}{\Sigma}$. We write $\infA \in \Sigma^\Nn$ for ``$\infA$ is an infinite word over $\Sigma$''.

Every infinite word can be treated as a~relational structure with universe $\Nn$, the binary order relation ${\leq}$, and a~unary relation $a$ for every $a\in \Sigma$. The semantics of these relations over a~given infinite word $\infA$ is natural, in particular $a(x)$ holds if $\infA(x)=a$.

When working with automata and logic it is natural to consider \emph{languages}---sets of infinite words satisfying certain properties. However, from the point of view of second-order arithmetic an infinite word is a~second-order object, so a~language would be a~``third-order object'' to which we do not have access. Therefore, in this paper we avoid talking directly about languages. Instead, when we want to express some properties of languages, we explicitly quantify over infinite words with a~given property.

\subsection*{Automata over infinite words}

A (nondeterministic) B\"uchi automaton is a~tuple $\Aa=\langle Q, \Sigma, q_\init, \delta, F\rangle$ where: $Q$ is a~finite set of \emph{states}, $\Sigma$ is an alphabet, $q_\init\in Q$ is an \emph{initial state}, $\delta\subseteq Q\times \Sigma\times Q$ is the \emph{transition relation}, and $F\subseteq Q$ is the set of \emph{accepting states}. Given an infinite word $\infA\in \Sigma^\Nn$, we say that $\rho\in Q^\Nn$ is a~\emph{run} of $\Aa$ over $\infA$ if $\rho(0)= q_\init$ and for every $n\in\Nn$ we have $\big(\rho(n),\infA(n),\rho(n+1)\big)\in\delta$. A run $\rho$ is \emph{accepting} if $\rho(n)\in F$ for infinitely many $n\in\Nn$. An automaton $\Aa$ \emph{accepts} $\infA$ if there exists an accepting run of $\Aa$ over $\infA$. An automaton is \emph{deterministic} if for every $q\in Q$ and $a\in\Sigma$ there is at most one transition of the form $(q,a,q') \in \delta$. When the automaton is not clear from the context, we put it in the superscript, i.e.~$Q^\Aa$ is the set of states of $\Aa$.

The possible transitions of a~B\"uchi automaton over a~particular letter $a\in\Sigma$ can be encoded as a~\emph{transition matrix} $\fun{M_a}{Q\times Q}{\{\trNon,\trYes,\trAcc\}}$, where $M_a(q,q')=\trNon$ if $(q,a,q')\notin\delta$, otherwise $M_a(q,q')=\trAcc$ if $q\in F$, and otherwise $M_a(q,q')=\trYes$. Let $[Q]$ be the set of all such functions $\fun{M}{Q\times Q}{\{\trNon,\trYes,\trAcc\}}$.

Since deterministic B\"uchi automata are strictly weaker than general B\"uchi automata \cite{PinPerrin}, one introduces the more flexible \emph{Rabin acceptance condition} in order to determinise B\"uchi automata. A \emph{Rabin automaton} is a~tuple $\Aa=\langle Q, \Sigma, q_\init, \delta, (E_i, F_i)_{i=1}^k \rangle$ as in the case of B\"uchi automata, where $E_i,F_i\subseteq Q$ for $i=1,\ldots,k$. A run $\rho \in Q^\Nn$ of $\Aa$ is \emph{accepting} if and only if for some $i\in\{1,\ldots,k\}$ each state in $E_i$ appears in $\rho$ only finitely many times and some state in $F_i$ appears in $\rho$ infinitely many times.

In general (i.e.~in $\mathsf{Z}_2$) Rabin automata can easily be complemented into so-called \emph{Streett automata}, and both classes can be transformed into nondeterministic B\"uchi automata. However, the transformations into B\"uchi automata require more than $\rca$. For Streett automata, $\Sigma^0_2\ind$ seems necessary. For Rabin, we need the B\"uchi automaton to guess that no state from a~specific set $E_i$ will reappear in the run under consideration. To prove that such a~construction is correct one needs $\Sigma^0_2$-collection---within $\rca$ the fact that in a~given run $\rho$ each state $q\in E_i$ appears only finitely many times does not imply a~global bound after which no state from $E_i$ reappears. That is the essential reason why it is not clear whether \ref{it:det_buchi} of Theorem~\ref{thm:main} implies the other items
in $\rca$.

\subsection*{Monadic Second-Order logic}

Monadic second-order logic (\mso) is an extension of first-order logic. 
\mso logic allows: boolean connectives ${\lnot}$, ${\lor}$, ${\land}$; the first-order quantifiers $\exists x$ and $\forall x$; and the monadic second-order quantifiers $\exists X$ and $\forall X$, where the variable $X$ ranges over subsets of the universe. Apart from predicates from the signature of a~given structure, the logic admits the binary predicate $x\in X$ with the natural semantics.

\subsection*{Definition of truth for $\mso$ over $\Nn$}

In order to state our theorems involving decidability of the $\mso$ theory 
of $(\Nn,{\le})$, we need to formulate the semantics of  monadic second-order logic within $\rca$. This involves a~coding of formulae $\phi \mapsto \godelize{\phi}$; we identify a~formula with its code. However, in second-order arithmetic there is no canonical definition of truth in an infinite structure which would work for all of \mso. Moreover, by Tarski's theorem on the undefinability of truth, for some infinite structures there is no such definition at all. In particular, it is not at all clear how to state the decidability of  $\mso(\Nn,{\le})$ as a~single sentence.

On the other hand, already $\rca$ is able to express a~truth definition for the \emph{depth\=/$n$ fragment of $\mso$}, for each $n \in \omega$. Here the depth of a~formula is calculated as the largest number of alternating blocks of $\land$/$\forall$'s and $\lor$/$\exists$'s appearing on a~branch in the syntactic tree of the formula (assume that all negations are pushed inside using the De Morgan laws). Essentially, the truth definition needs 
one universal set quantifier for a~block of $\land$/$\forall$'s and 
one existential set quantifier for a~block of $\lor$/$\exists$'s\footnote{After slight modifications, the truth definition would still work if we allowed depth-$n$ formulae to contain arbitrarily many alternations $\land$'s and $\lor$'s inside the scope of the last quantifier counted towards the depth.}.

So, what is possible is to provide formulae $\varphi_n$ stating that the depth-$n$ fragment of $\mso(\Nn,{\le})$ is decidable. We show in Section~\ref{sec:compl-to-dec} that every $\varphi_n$ can be proved in $\rca$ assuming a~complementation procedure for B\"uchi automata. On the other hand, 
we show in Sections~\ref{sec:ind-to-ramsey}, \ref{sec:ramsey-to-compl} that the existence of such a~procedure follows from $\Sigma^0_2\ind$, 
and in Section~\ref{sec:dec-to-ind} that $\varphi_5$ implies $\Sigma^0_2\ind$. As a~corollary, $\rca \vdash \varphi_5 \Rightarrow \varphi_n$ 
for every $n \in \omega$. In fact, assuming $\varphi_5$ there is a~single 
algorithm which provably in $\rca$ witnesses $\varphi_6, \varphi_7, \ldots$.

\subsection*{The B\"uchi decidability theorem}

In~\cite{buchi1962} B\"uchi proved decidability of the theory $\mso(\Nn,{\le})$. In the context of B\"uchi's procedure, it is natural to evaluate $\mso$ sentences not just on $\mso(\Nn,{\le})$, but also on infinite words over various alphabets $\Sigma$. These infinite words are represented as expansions of $(\Nn,{\le})$ by unary relations, in the way explained above.

The following theorem captures as much of B\"uchi's result as can be naturally expressed in relatively weak theories of second-order arithmetic.

\begin{theorem}[B\"uchi formalised] 
\label{thm:buchi-in-rca}
There exists an effective procedure $p$ such that for every fixed depth $n\!\in \!\omega$ the following is provable in $\rca+\Sigma^0_2\ind$. For every statement $\phi$ of \mso over an alphabet $\Sigma$ such that the depth of $\phi$ is at most $n$, the procedure $p(\phi)$ produces a~nondeterministic B\"uchi automaton $\Aa$ over $\Sigma$ such that for every infinite word $\infA\!\in\!\Sigma^\Nn$, this word satisfies $\phi$ if and only if $\Aa$ accepts $\infA$. Moreover, it is decidable if a~given nondeterministic B\"uchi automaton accepts at least one infinite word.
\end{theorem}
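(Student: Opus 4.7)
The plan is to prove the theorem by external meta-induction on the depth $n \in \omega$. For each fixed $n$, one defines a primitive recursive procedure $P_n$ translating a depth-$n$ \mso formula $\phi$ over an alphabet $\Sigma$ into a nondeterministic B\"uchi automaton $\Aa_\phi$, and verifies the correctness of $P_n$ inside $\rca + \Sigma^0_2\ind$. A preliminary step is to record that depth-$n$ \mso admits a truth predicate definable in $\rca$ with a bounded number of set-quantifier alternations (determined by $n$), which makes ``$\infA \models \phi \iff \Aa_\phi$ accepts $\infA$'' a statement we can formulate and prove as a single sentence for each fixed $n$.

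The construction of $\Aa_\phi$ proceeds by structural recursion on $\phi$. I would follow the standard trick of representing first-order variables as set variables constrained to be singletons, encoding all free variables into an extended alphabet $\Sigma \times \{0,1\}^k$. Atomic formulas such as $x \le y$ or $x \in X$ translate into small deterministic automata directly constructible in $\rca$. Conjunction and disjunction are handled by the usual product and disjoint-union constructions, and existential second-order quantification by projection (marking the existentially quantified component as a nondeterministic guess); these closures are routine in $\rca$. The one inductive case that really requires strength beyond $\rca$ is negation: here we invoke the complementation procedure for B\"uchi automata supplied by item~\ref{it:compl_buchi} of Theorem~\ref{thm:main}. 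Since the main theorem shows $(\ref{it:induction}) \Rightarrow (\ref{it:compl_buchi})$, we may work in $\rca + \Sigma^0_2\ind$ with complementation and its correctness proof available.

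For the final assertion, I would observe that nonemptiness of a nondeterministic B\"uchi automaton $\Aa$ is a purely finite-graph property: $\Aa$ accepts some infinite word iff in its transition graph some accepting state is reachable from $q_\init$ and lies on a nontrivial cycle reachable from $q_\init$. Since $Q$ is finite, this check is primitive recursive and its correctness is elementary in $\rca$. The main obstacle is the orchestration of the meta-induction on $n$ with the internal semantic reasoning: we must ensure the truth predicate for depth-$n$ formulas sits at the right quantifier level so that the inductive equivalence ``$\infA \models \phi \iff \Aa_\phi$ accepts $\infA$'' remains provable, and that the complementation invocation in the negation case composes cleanly with the other closure constructions across the alphabet extensions introduced to track free variables; the rest is routine automata-theoretic bookkeeping.
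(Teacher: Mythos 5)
Your overall architecture matches the paper's: first-order variables are encoded as singleton set variables over an extended alphabet, the only step requiring strength beyond $\rca$ is complementation (supplied by $\Sigma^0_2\ind$ via the Additive Ramsey Theorem), correctness is verified by external induction on the depth $n$, and emptiness is decided by searching for a reachable cycle through an accepting state, exactly as in the paper's Lemma~\ref{lem:emptiness-decidable}.

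The one genuine gap is the point you flag in your last paragraph but do not resolve. A ``structural recursion on $\phi$'' that peels off one connective at a time cannot carry the correctness proof: a depth-$n$ formula may have a syntax tree of arbitrary \emph{internal} size (a $k$-fold conjunction is still depth $1$), so a connective-by-connective correctness argument is an internal induction of internal length whose induction hypothesis quantifies over all infinite words and invokes the truth predicate --- a non-arithmetical formula for which $\rca+\Sigma^0_2\ind$ provides no induction. The missing idea is the paper's normal form: rewrite $\phi$ into alternating blocks $\exists\bar X.\,\bigvee_i\psi_i$ and $\forall\bar X.\,\bigwedge_i\psi_i$ with negations pushed to the atoms, and supply block constructions ($k$-ary union, projection of a whole tuple of variables, weakening) whose correctness is proved \emph{uniformly} in the internal parameter $k$; the external induction then has exactly $n$ steps. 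Within that organisation your route differs mildly from the paper's in that you propose a product construction for conjunction, whereas the paper avoids products by treating $\forall\bar X.\,\bigwedge_i\psi_i$ as $\neg\exists\bar X.\,\bigvee_i\neg\psi_i$ at the cost of two complementations; either works, provided your $k$-ary product's correctness is also established uniformly in $k$. Finally, the theorem asks for a single procedure $P$ with only its correctness proof depending on $n$, rather than a family $P_n$; this is harmless, since the construction itself does not depend on $n$.
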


We discuss some issues related to formalising the inductive proof of B\"uchi's theorem in Section~\ref{sec:compl-to-dec}. The crucial step concerns complementation of automata, which is used to treat negations of subformulae in $\phi$ (or subformulae beginning with $\forall$, assuming the negations have been pushed inside).


\section{\texorpdfstring{$\Sigma^0_2\ind$}{IND} implies Additive Ramsey}
\label{sec:ind-to-ramsey}

The aim of this section is to prove the following proposition, which is implication $\ref{it:induction} \to \ref{it:ramsey}$ of Theorem \ref{thm:main}.

\newcommand{\thmIndToRamsey}{Over $\rca$, $\Sigma^0_2\ind$ implies Additive Ramsey's Theorem.}

\begin{proposition}
\label{thm:ind-to-ramsey}
\thmIndToRamsey
\end{proposition}

The proof of Proposition \ref{thm:ind-to-ramsey} consists of two steps. First, we prove another weakening of Ramsey's Theorem.

\begin{definition}
\label{def:ordered-ramsey}
\emph{Ordered Ramsey's Theorem} for pairs states that if $(P, \preceq)$ is a finite partial order and $\fun{\colA}{[\Nn]^2}{P}$ is a colouring such that for every $i\! <\! j\! <\! k$ we have $\colA(i, j) \succeq \colA(i, k)$, then there exists an infinite homogeneous set $I\subseteq \Nn$, i.e.~$\colA(i,j)=\colA(i',j')$ for all $(i,j),(i',j')\in [I]^2$.
\end{definition}

It will follow from Lemma~\ref{lem:orderedramsey} below and the proof of Proposition~\ref{lem:ramsey-to-ind} in Section~\ref{ap:ramsey-to-ind} that Ordered Ramsey's Theorem is equivalent to its restriction to linear orders, and thus to the case where $P$ is $\{0,\ldots,n\}$ for some $n \in \Nn$ and ${\preceq}$ is the usual ordering. Note also that the theorem follows immediately from the so-called  \emph{Stable Ramsey's Theorem} $\srt^2_{<\infty}$ (cf.~\cite[Sections~6.4 and~6.8]{hirschfeldt}), where the requirement on $\colA$ is only that $\colA(i,\cdot)$ should stabilise for each $i$.

\begin{lemma}
\label{lem:orderedramsey}
Over $\rca$, $\Sigma^0_2\ind$ proves Ordered Ramsey's Theorem.
\end{lemma}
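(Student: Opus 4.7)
I work in $\rca + \Sigma^0_2\ind$ and proceed in three stages.

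\emph{Stabilization (in $\rca$).} For each $i$, the sequence $(C(i,j))_{j > i}$ is $\preceq$-non-increasing by hypothesis. I claim it must eventually be constant: if not, primitive recursion (available in $\rca$) enumerates indices $j_1 < j_2 < \cdots$ at which $C(i,j_k) \succ C(i,j_{k+1})$; the first $|P|+1$ of these witness a strictly $\prec$-decreasing chain in $P$ of length exceeding $|P|$, contradicting finiteness. Hence the limit colour $p_i \in P$ is well defined by the $\Sigma^0_2$ formula $\exists N\,\forall j \geq N,\, C(i,j) = p_i$.

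\emph{Finding a popular colour.} This is the main use of $\Sigma^0_2\ind$. Since $p_i$ is only $\Sigma^0_2$-definable, infinite pigeonhole cannot be applied to it directly. Using $\Sigma^0_2$-collection (a consequence of $\Sigma^0_2\ind$), I upgrade individual stabilization to a uniform version: for every $M$ there exists $N^*(M)$ such that $C(i,j) = p_i$ for every $i \leq M$ and $j \geq N^*(M)$. Bounded $\Sigma^0_2$-comprehension then realises the tuple $(p_0,\ldots,p_M)$ as a finite object, to which finite pigeonhole applies, yielding some $c_M \in P$ with $|\{i \leq M : p_i = c_M\}| \geq \lceil (M+1)/|P|\rceil$. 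A further pigeonhole on the sequence $(c_M)$ produces $p^* \in P$ with $c_M = p^*$ for arbitrarily large $M$; this forces $\{i : p_i = p^*\}$ to be infinite.

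\emph{Thinning.} Set $p := p^*$, so $A_p := \{i : p_i = p\}$ is infinite. I construct $h_0 < h_1 < \cdots$ in $A_p$ by recursion, choosing $h_{k+1}$ to exceed both $h_k$ and the stabilization points of $h_0,\ldots,h_k$. Then $C(h_l, h_m) = p$ for all $l < m$, so $\{h_k : k \in \Nn\}$ is the required infinite homogeneous set. The admissibility condition defining $h_{k+1}$ is $\Sigma^0_2$, so this recursion is justified by bounded $\Sigma^0_2$-comprehension together with $\Sigma^0_2\ind$.

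\emph{Main obstacle.} The central difficulty lies in the popular-colour step, because the limit function $i \mapsto p_i$ is genuinely $\Sigma^0_2$; naive pigeonhole arguments push us past what $\Sigma^0_2\ind$ can support. Using $\Sigma^0_2$-collection to obtain simultaneous stabilization, and bounded $\Sigma^0_2$-comprehension to represent initial segments $(p_0,\ldots,p_M)$ as finite objects, is precisely what reduces the problem to a finitary pigeonhole at each stage $M$ and keeps the whole argument within $\rca + \Sigma^0_2\ind$.
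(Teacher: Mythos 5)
Your stabilization stage and the first half of the popular-colour stage are fine: row-wise stabilization of $(\colA(i,j))_{j>i}$ is provable in $\rca$, and $\Pi^0_1$-collection (a consequence of $\Sigma^0_2\ind$) does yield a uniform stabilization point $N^*(M)$ and hence the tuple $(p_0,\dots,p_M)$ for each $M$. The gap is the ``further pigeonhole on the sequence $(c_M)$''. Because $c_M$ is determined only after quantifying over $N^*(M)$, the predicate $c_M=p$ is genuinely $\Delta^0_2$, so ``$c_M=p$ for arbitrarily large $M$'' is $\Pi^0_3$, and ruling out that every $p\in P$ occurs only boundedly often in $(c_M)$ requires collecting witnesses for $|P|$-many $\Pi^0_2$ statements, i.e.\ an instance of $\Sigma^0_3$-collection (equivalently, the infinite pigeonhole principle for $\Delta^0_2$ colourings). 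This is strictly stronger than $\Sigma^0_2\ind$ over $\rca$, and $P$ may be a nonstandard finite set, so the quantifier over $p$ cannot be unwound externally. A second, related problem occurs in the thinning stage: membership in $A_{p}=\{i:p_i=p\}$ is a $\Delta^0_2$ condition, so the set $\{h_k:k\in\Nn\}$ is only $\Delta^0_2$-definable; bounded $\Sigma^0_2$-comprehension provides each finite initial segment but not the infinite homogeneous set as a second-order object, and $\rca$ has only $\Delta^0_1$-comprehension. Concretely, in the minimal model $(\omega,\mathrm{Dec})$ the homogeneous set your construction describes need not be computable in the colouring and hence need not exist in the model.

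The paper's proof is designed to avoid exactly these two obstacles. It never forms the limit colours $p_i$; instead it uses the $\Pi^0_2$ property ``$p$ is recurring'', meaning $\colA(j,k)=p$ for arbitrarily large pairs $j<k$. $\Sigma^0_2$-collection bounds the non-recurring colours, so a recurring colour exists, and the $\Sigma^0_2$ least-number principle produces a $\preceq$-minimal recurring colour $p_0$. The homogeneous set is then obtained by a recursion that searches for the least pair $(u_{i+1},v_{i+1})$ with the decidable certificate $\colA(u_{i+1},v_{i+1})=p_0$, so the resulting set is computable from the colouring; the $\preceq$-minimality of $p_0$, combined with the monotonicity hypothesis, is what upgrades these one-point certificates to homogeneity of $\{u_i\}$. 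To repair your argument you would need to replace the popular-limit-colour step by a minimality argument of this kind and arrange decidable certificates for membership in the homogeneous set.
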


\begin{proof}
We call a colour $p\in P$ \emph{recurring} if $\forall i\, \exists k\!>\!j\!>\!i.\ \colA(j,k)=p$. Notice that for each non-recurring colour $p$ there exists $i_p$ such that there is no occurrence of $p$ to the right of $i_p$ (i.e.~no $k>j>i_p$ such that $\colA(j,k)=p$). By an application of strong $\Sigma^0_2$-collection we obtain some $i_0$ such that for every non-recurring colour $p$ and every $k>j>i_0$ we have $\colA(j,k)\neq p$. In particular, there is a recurring colour. Moreover, being a recurring colour is a $\Pi^0_2$ property, so by $\Sigma^0_2\ind$ we can find a ${\preceq}$-minimal recurring colour $p_0$. 

We now define a sequence $(u_i,v_i)_{i \in \Nn}$ by primitive recursion on $i$. Let $(u_0, v_0)$ be some pair such that $i_0 < u_0 < v_0$ and $c(u_0, v_0) = p_0$. Now assume that $u_0 < v_0 \le u_1 < v_1  \ldots \le u_i < v_i$ have been defined, $\{u_0,\ldots,u_i\}$ is homogeneous with colour $p_0$, and $\colA(u_i,v_i)= p_0$. Let $(u_{i+1}, v_{i+1})$ be the smallest pair such $v_i \le u_{i+1} < v_{i+1}$ and $\colA(u_{i+1}, v_{i+1}) = p_0$. Such a pair exists because $p_0$ is recurring. We know that $\colA(u_i,u_{i+1}) = p_0$, since on the one hand $\colA(u_i,u_{i+1}) \preceq \colA(u_i,v_i) = p_0$, and on the other hand $u_i > i_0$ and thus  $\colA(u_i,u_{i+1})$ is a recurring colour, so it cannot be $\preceq$-strictly smaller than $p_0$. Similarly, for $j<i$ we know that $\colA(u_j,u_{i+1}) = p_0$ because $\colA(u_j,u_{i+1}) \preceq p_0$ and $u_j > i_0$. Therefore, the set $\{u_i\mid i \in \Nn\}$ is homogeneous for $\colA$.
\end{proof}

Before proceeding to prove the additive version of Ramsey's Theorem, we recall a few basic facts about finite semigroups we shall use in our proof. The facts are proved by elementary combinatorial arguments which readily formalise in $\rca$. The proofs can be found for instance in~\cite{PinPerrin}.

\begin{definition}
\label{def:green}
Green preorders over a semigroup $S$ are defined as follows
\begin{itemize}
\item $s \le_{\GrR} t$ if and only if $s=t$ or $s \in t *S=\{t * a\mid a\in S\}$,
\item $s \le_{\GrL} t$ if and only if $s=t$ or $s \in S *t=\{a * t\mid a\in S\}$,
\item $s \le_{\GrH} t$ if and only if $s \le_{\GrR} t$ and $s \le_{\GrL} t$,
\item ${\le_{\GrJ}}$ is the transitive closure of the union of ${\le_{\GrR}}$ and ${\le_{\GrL}}$.
\end{itemize}
The associated equivalence relations are written $\GrR$, $\GrL$, $\GrH$, $\GrJ$; their equivalence classes are called respectively $\GrR$, $\GrL$, $\GrH$, and $\GrJ$-classes.
\end{definition}

\begin{lemma}
\label{lem:leLRH}
For every finite semigroup $S$ and $s, t \in S$, $s \le_{\GrL} t$ and $s \GrR t$ implies $s \GrH t$.
\end{lemma}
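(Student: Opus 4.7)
The plan is to derive $t \le_{\GrL} s$; combined with the given $s \le_{\GrL} t$ this yields $s \GrL t$, and together with $s \GrR t$ it gives exactly $s \GrH t$.

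First I would unfold the hypotheses. From $s \le_{\GrL} t$, either $s = t$ (in which case the conclusion is immediate) or $s = w * t$ for some $w \in S$. Since $s \GrR t$ includes $t \le_{\GrR} s$, either $t = s$ or $t = s * v$ for some $v \in S$. Restricting to the nontrivial case $s \neq t$, both witnesses lie in $S$, and combining the two equations yields $t = w * t * v$, which iterates to $t = w^n * t * v^n$ for every $n \ge 1$.

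Next I would invoke the standard fact that in any finite semigroup every element has an idempotent power, i.e.\ there exists $n \ge 1$ with $w^{2n} = w^n$. This is a routine pigeonhole argument on the sequence of powers of $w$ and formalises in $\rca$ without difficulty. Writing $e = w^n$, so $e * e = e$, the identity $t = e * t * v^n$ yields
\[
e * t \;=\; e * (e * t * v^n) \;=\; e^2 * t * v^n \;=\; e * t * v^n \;=\; t.
\]
If $n = 1$, this reads $w * t = t$, i.e.\ $s = t$, contradicting the case assumption; hence $n \ge 2$, so $w^{n-1} \in S$, and $t = e * t = w^{n-1} * (w * t) = w^{n-1} * s$, which is exactly $t \le_{\GrL} s$ as required.

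The argument is elementary and stays well within $\rca$: the only nontrivial ingredient is the pigeonhole extraction of an idempotent power in a finite semigroup, and everything else is direct manipulation in $S$. No step seems to present a genuine obstacle; the main point to be careful about is verifying that the exponent $n - 1$ really puts the witness in $S$ rather than in the adjoined monoid $S^1$, which is handled by the small case split on whether $n = 1$.
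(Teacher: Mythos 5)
Your proof is correct, and it is essentially the argument the paper has in mind: the paper does not spell out a proof of this lemma, instead deferring to the standard treatment of Green's relations in Pin--Perrin, and the standard proof there is exactly your computation (write $s = w*t$, $t = s*v$, iterate to $t = w^n * t * v^n$, pass to an idempotent power of $w$ to conclude $w^n * t = t$ and hence $t \le_{\GrL} s$). Your handling of the edge cases --- the trivial disjunct $s = t$ in the definition of $\le_{\GrL}$, and the exponent $n=1$ where the witness $w^{n-1}$ would fall outside $S$ --- is careful and sound (one could alternatively just replace $n$ by $2n$ to force $n \ge 2$), and the whole argument is finite combinatorics on a finite semigroup, so it formalises in $\rca$ as the paper claims.
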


\begin{lemC}[{\cite[Proposition~2.4]{PinPerrin}}]
\label{lem:Hidemgroup}
If $(S, {*})$ is a finite semigroup, $H\subseteq S$ an $\GrH$-class, and some $a,b\in H$ satisfy $a*b\in H$ then for some $e\in H$ we know that $(H, {*}, e)$ is a group.
\end{lemC}

Now we can prove the main result of the section.

\begin{proofof}{Proposition~\ref{thm:ind-to-ramsey}}
Let a colouring $\colA$ take values in the finite semigroup $(S,{*})$ and satisfy the additivity condition of Definition~\ref{def:additive-ramsey}. For every position $i$ and every $k \geq j > i$, let us observe that $\colA(i,k) \le_{\GrR} \colA(i, j)$. Let $r$ be the function mapping every element of $S$ to its $\GrR$-class. The function $r \circ \colA$ is an ordered colouring with respect to~${\le_{\GrR}}$; let us use Lemma~\ref{lem:orderedramsey} to obtain a homogeneous sequence $(u_i)_{i \in \Nn}$ for $r \circ \colA$.

Since $S$ is finite, we can use $\Sigma^0_2$-collection to prove that there is some colour $a$ such that $\colA(u_0,u_{i}) = a$ for infinitely many $i$. This allows us to take a subsequence $(v_i)_{i \ge 0}$ of $(u_i)_{i \ge 0}$ such that $\colA(v_0,v_{i})=a$ for each $i$.

We now know that $a = a * \colA(v_i, v_j)$ for every $0 < i < j$. In particular, $a \le_{\GrL} \colA(v_i, v_j)$ by the definition of ${\le_{\GrL}}$. Since $a$ and $\colA(v_i, v_j)$ are $\GrR$-equivalent, Lemma~\ref{lem:leLRH} implies that $\colA(v_i, v_j) \GrH a$. Let $H$ be the $\GrH$-class of $a$. Since $a*\colA(v_i,v_j)=a\in H$, we know by Lemma~\ref{lem:Hidemgroup} that $(H,{*},e)$ is a group for some $e\in H$. Using this group structure and the equation $a = a * \colA(v_i, v_j)$ we obtain that $\colA(v_i, v_j)=e$. Hence, $\{v_{i+1} \mid i \in \Nn \}$ is a homogeneous set for $\colA$ with the colour $e$.
\end{proofof}


\section{Additive Ramsey implies complementation}
\label{sec:ramsey-to-compl}

In this section, we sketch a proof of the following result, which is implication $\ref{it:ramsey}\to \ref{it:compl_buchi}$ of Theorem \ref{thm:main}.

\begin{proposition}
\label{thm:ramsey-to-compl}
Over $\rca$, Additive Ramsey's Theorem proves the correctness of the standard complementation procedure for 
B\"uchi automata: given a B\"uchi automaton $\Aa$ over an alphabet $\Sigma$, the procedure outputs a B\"uchi automaton $\Bb$ over the same alphabet such that for every $\infA\in \Sigma^\Nn$ we have that $\Aa$ accepts $\infA$ if and only if $\Bb$ does not accept $\infA$.
%
\end{proposition}

The proof of this result follows the usual construction of the automaton $\Bb$~\cite{buchi1962}: the states of $\Bb$ are based on transition matrices of $\Aa$ (see Section~\ref{sec:intro_automata}). The automaton $\Bb$ guesses a Ramseyan decomposition of the given infinite word $\infA$ with respect to a certain homomorphism into $[Q]$; and then verifies that the decomposition witnesses that there cannot be any accepting run of $\Aa$ over $\infA$.

Let us fix a B\"uchi automaton $\Aa=\langle Q, \Sigma, q_\init, \delta, F\rangle$. We will introduce a semigroup structure on the set of all transition matrices of $\Aa$. Let us define the natural operations of addition and multiplication over $\{\trNon,\trYes,\trAcc\}$ as depicted on Figure~\ref{fig:operations}. The addition makes it possible to choose a preferred run (i.e.~an accepting transition is better than a non-accepting one) and the multiplication corresponds to concatenation of runs.

\begin{figure}
\centering
\begin{tikzpicture}[scale=0.5, every node/.style={scale=0.8}]
\coordinate (sero) at (-0.5,+0.5);

\foreach \z in {0,1,2,3} {
	\draw ($(sero)+(0.5,-0.5)+(-1,-\z)$) -- ($(sero)+(0.5,-0.5)+(3,-\z)$);
	\draw ($(sero)+(0.5,-0.5)+(\z,+1)$) -- ($(sero)+(0.5,-0.5)+(\z, -3)$);
}

\node at ($(sero) + (+0,-0)$) {${+}$};
\node at ($(sero) + (+1,-0)$) {$\trNon$};
\node at ($(sero) + (+2,-0)$) {$\trYes$};
\node at ($(sero) + (+3,-0)$) {$\trAcc$};

\node at ($(sero) + (+0,-1)$) {$\trNon$};
\node at ($(sero) + (+1,-1)$) {$\trNon$};
\node at ($(sero) + (+2,-1)$) {$\trYes$};
\node at ($(sero) + (+3,-1)$) {$\trAcc$};

\node at ($(sero) + (+0,-2)$) {$\trYes$};
\node at ($(sero) + (+1,-2)$) {$\trYes$};
\node at ($(sero) + (+2,-2)$) {$\trYes$};
\node at ($(sero) + (+3,-2)$) {$\trAcc$};

\node at ($(sero) + (+0,-3)$) {$\trAcc$};
\node at ($(sero) + (+1,-3)$) {$\trAcc$};
\node at ($(sero) + (+2,-3)$) {$\trAcc$};
\node at ($(sero) + (+3,-3)$) {$\trAcc$};

\coordinate (sero) at (+6.5,+0.5);

\foreach \z in {0,1,2,3} {
	\draw ($(sero)+(0.5,-0.5)+(-1,-\z)$) -- ($(sero)+(0.5,-0.5)+(3,-\z)$);
	\draw ($(sero)+(0.5,-0.5)+(\z,+1)$) -- ($(sero)+(0.5,-0.5)+(\z, -3)$);
}

\node at ($(sero) + (+0,-0)$) {${*}$};
\node at ($(sero) + (+1,-0)$) {$\trNon$};
\node at ($(sero) + (+2,-0)$) {$\trYes$};
\node at ($(sero) + (+3,-0)$) {$\trAcc$};

\node at ($(sero) + (+0,-1)$) {$\trNon$};
\node at ($(sero) + (+1,-1)$) {$\trNon$};
\node at ($(sero) + (+2,-1)$) {$\trNon$};
\node at ($(sero) + (+3,-1)$) {$\trNon$};

\node at ($(sero) + (+0,-2)$) {$\trYes$};
\node at ($(sero) + (+1,-2)$) {$\trNon$};
\node at ($(sero) + (+2,-2)$) {$\trYes$};
\node at ($(sero) + (+3,-2)$) {$\trAcc$};

\node at ($(sero) + (+0,-3)$) {$\trAcc$};
\node at ($(sero) + (+1,-3)$) {$\trNon$};
\node at ($(sero) + (+2,-3)$) {$\trAcc$};
\node at ($(sero) + (+3,-3)$) {$\trAcc$};
\end{tikzpicture}
\caption{Two operations on $\{\trNon, \trYes,\trAcc\}$ used to define multiplication on $[Q]$.}
\label{fig:operations}
\end{figure}

Now, given two transition matrices $M,N\in [Q]$ we can naturally define the matrix $M * N$ that is obtained by the standard matrix multiplication formula. 
Notice that the mapping $\Sigma\ni a\mapsto M_a\in[Q]$ from Section~\ref{sec:intro_automata} can be extended to a homomorphism $\fun{\homA}{\Sigma^\ast}{[Q]}$. Clearly, for a finite word $u\in\Sigma^\ast$ 
the matrix $\homA(u)$ represents possible runs of $\Aa$ over $u$, in analogy to the way in which $M_a$ represents
possible transitions over $a$.

We will say that a pair $(N,M)\in [Q]\times[Q]$ is \emph{rejecting} if:
\begin{itemize}
\item $N * M = N$,
\item $M * M= M$,
\item but there is no $q\in  Q$ such that $N(q_\init,q)\in\{\trYes,\trAcc\}$ and $M(q,q)=\trAcc$.
\end{itemize}

The structure of the automaton $\Bb$ is as follows: its set of states is $([Q])^3\cup ([Q])^2\cup [Q]\cup\{q_\init\}$. Intuitively, the automaton needs to guess that a given infinite word admits a homogeneous decomposition where the initial fragment has type $N$ and the homogeneous colour is $M$, for a rejecting pair $(N,M)$. The initial state of the automaton is $q_\init$. The accepting states are those in $[Q]$. The automaton has the following transitions (we write $K\trans{a}K'$ for a transition $(K,a,K')\in\delta$):

\begin{itemize}
\item $q_\init\trans{a} (N,M,M_a)$ for all rejecting pairs $(N,M)$,
\item $(N,M,K)\trans{a} (N,M,K * M_a)$,
\item $(N,M,K)\trans{a} M$, if $K * M_a=N$,
\item $M\trans{a} (M,M_a)$,
\item $M\trans{a} M$ if $M_a=M$,
\item $(M,K)\trans{a} (M,K * M_a)$,
\item $(M,K)\trans{a} M$, if $K * M_a=M$.
\end{itemize}

To complete the proof of Proposition \ref{thm:ramsey-to-compl}, it remains to show the following.

\begin{lemma}
\label{sublem:ramsey-to-compl}
Over $\rca$, Additive Ramsey's Theorem implies that for every infinite word $\infA$ the automaton $\Bb$ described above accepts $\infA$ if and only if the automaton $\Aa$ does not accept~$\infA$.
\end{lemma}

\begin{proof}
First assume that both $\Aa$ and $\Bb$ accept an infinite word $\infA$. Let $\rho$ be an accepting run of $\Aa$ and let $\tau$ be an accepting run of $\Bb$. Let the state $\tau(1)$ be $(N,M,K)$. Since $\tau$ is accepting, we know that $\tau$ visits a state from $[Q]$ infinitely many times.

The only possible such state is $M$. Taking $k_0 < k_1 < \ldots$ such that $\tau (k_i ) = M$ for each $i$, we can
decompose $\infA$ as $\infA=u_0u_1\ldots$ where the length of $u_0 u_1 . . . u_i$ is $k_i$. Then $\homA(u_0)= N$ and $\homA(u_i)=M$ for all $i>0$. 
Our aim is to find a state $q$ such that for some $j>i>0$ we have $\rho(k_i)=\rho(k_j)=q$ and there is some $\ell$ such that $k_i\leq \ell < k_j$ and $\rho(\ell)\in F$. We can find such $q$ using the pigeonhole principle: first define $\ell_0=1$ and then let $\ell_{i+1}$ be the smallest number such that there is an accepting state in $\rho$ between $k_{\ell_{i}}$ and $k_{\ell_{i+1}}$. The sequence $(\ell_i)_{i \in \Nn}$ is defined by primitive recursion, therefore it can be constructed in $\rca$. By the (finite) pigeonhole principle, there exist $0\leq i < j\leq |Q|+1$ such that $\rho(k_{\ell_i})=\rho(k_{\ell_j})=q$. Since $M*M=M$ and $\rho$ has an accepting state between $k_{\ell_i}$ and $k_{\ell_j}$ we know that $M(q,q)=\trAcc$. Similarly, since $N * M=N$, we know that $N(q_\init,q)\in\{\trYes,\trAcc\}$. It means that the pair $(N,M)$ is not rejecting, which contradicts the definition of the transitions of~$\Bb$.

Now assume that the automaton $\Bb$ rejects a given infinite word $\infA$. Consider a colouring $\colA$ such that for $i<j$ we have $\colA(i,j)=\homA\big(\infA(i)\infA(i+1)\cdots\infA(j-1)\big)$. Since $\homA$ is a homomorphism, we know that $\colA$ is additive. By Additive Ramsey's Theorem,
we can find $k_0 < k_1 < \ldots$ forming a homogeneous set for $C$. Decomposing $\infA=u_0u_1\ldots$ with $k_i$ the length of $u_0u_1\ldots u_i$ as previously, we have some $N,M\in[Q]$ such that $M * M = M$, $\homA(u_0)= N$ and $\homA(u_i)=M$ for all $i>0$.
by skipping the first element of the homogeneous set. If the pair $(N,M)$ was rejecting, the automaton $\Bb$ would accept $\infA$---we would be able to define using $\Delta^0_1$-comprehension an accepting run $\tau$ of $\Bb$ over $\infA$ such that $\tau(k_i)=M$ for all $i>1$. Therefore, there exists a state $q$ of the kind disallowed by the definition of a rejecting pair. This state can be used to construct an accepting run $\rho$ of $\Aa$ over $\infA$, such that for every $i>0$ we have $\rho(k_i)=q$. As above, such a run can be defined by $\Delta^0_1$\=/comprehension.
\end{proof}


\section{Effective complementation implies decidability}
\label{sec:compl-to-dec}

The following gives implication $\ref{it:compl_buchi}\to \ref{it:dec_mso}$ of Theorem \ref{thm:main}.

\begin{proposition}
\label{thm:compl-to-dec}
For each $n\in\omega$, $\rca$ proves: if there exists an algorithm for complementing B\"uchi automata, then there exists an algorithm which, given an \mso formula $\varphi$ of depth at most $n$, outputs an automaton $\Aa_\varphi$ such that for every word $\infA$, the formula $\varphi$ is satisfied by $\infA$ if and only if $\Aa_\varphi$ accepts $\infA$. As a consequence, the depth-$n$ fragment of $\mso(\Nn,{\le})$ is decidable.
\end{proposition}

\begin{remark}
In fact, the algorithm producing $\Aa_\varphi$ on input $\varphi$ is the same for each $n$. This is because there is a standard procedure
(in the terminology of computability theory, a \emph{Turing functional}) for converting algorithms for complementing B\"uchi automata into
algorithms deciding $\mso(\Nn, {\le})$. The proof of Proposition~\ref{thm:compl-to-dec} verifies that the algorithm obtained by this procedure
is, provably in $\rca$, correct on depth-$n$ sentences for each $n \in \omega$.
\end{remark}

The proof of Proposition \ref{thm:compl-to-dec} is based on the usual idea: given $\varphi$, inductively construct automata $\Aa_\psi$ corresponding to increasingly complicated subformulae $\psi$ of $\varphi$. However, the formula ``$\Aa_\psi$ is equivalent to $\psi$'' as written is not $\Sigma^0_1$ (not even arithmetical, as it quantifies over infinite words), so induction for it is not available in $\rca$. To deal with that, we make sure that for $\varphi$ of depth $n$ the algorithm only makes $O(n)$ \emph{big steps}, with a single \emph{big step} corresponding to an entire block of quantifiers/connectives at a given depth within $\varphi$. In that way, we can reason by induction of fixed length $n$, which is available in $\rca$ for formulae of arbitrary complexity. 




\begin{proof}
We first note that w.l.o.g.\ we can restrict attention to depth-$n$ MSO formulae of the form $\psi$ or $\xi$ given by the following grammar:

\[
\begin{array}{lcl}
\psi &:=& \forall \bar X.\, \bigwedge_{i = 1}^k \xi_i \bnfalt A \bnfalt \neg A \\
\xi &:=& \exists \bar X.\, \bigvee_{i = 1}^k \psi_i \bnfalt A \bnfalt \neg A \\
A &:=& \Sing(X) \bnfalt \min X \le \min Y \bnfalt X\subseteq Y
\end{array}
\]
where $\Sing(X)$ means ``$X$ is a singleton'' and $\min(X) \le \min(Y)$ means ``either $Y$ is empty or there is an element of $X$ less than or equal to the smallest element of $Y$''. The reason is that provably in $\rca$, it is possible to perform the following operations on an $\mso$ formula:
\begin{itemize}
\item replace each first-order variable $x$ by a corresponding second-order variable $X$; translate $x \le y$ to $\min(X) \le \min(Y)$ and relativise quantifiers over $X$ to $\Sing$,
\item push negations downwards to the level of atomic formulae,
\item rearrange $\lor$'s and $\exists$'s (respectively, $\land$'s and $\forall$'s) lying at the same depth,
\end{itemize}
and obtain a formula of the same depth which is equivalent to the original one modulo the obvious identification of $x$ with $\{x\}$.
The benefit of doing so is that we obtain formulae containing solely second-order variables. We can then treat an assignment to the variables
$X_1,\ldots,X_k$ as an infinite word over the alphabet $\{0,1\}^k$.


We also note that given an automaton $\Aa$ over $\{0,1\}^k$, it is easy to construct an automaton over $\{0,1\}^{k+\ell}$ which behaves just like $\Aa$ and ignores the additional $\ell$ coordinates. For this reason, when describing the automaton~$\Aa_\psi$ assigned to a formula $\psi$, we can safely assume that the alphabet of $\Aa_\psi$ has exactly as many coordinates as there are free variables in $\psi$; in the later steps of the construction, extra coordinates can be added as needed.  

The algorithm assigning automata to subformulae of $\varphi$ works inductively as follows:

\begin{figure}
\begin{center}
\begin{tikzpicture}[node distance=3.5cm,on grid,auto] 
   \node[state,initial] (q_0)   {}; 
   \node[state,accepting] (q_1) [right=of q_0] {};
    \path[->] 
    (q_0) edge [trans, loop above] node {0} ()
          edge [trans] node {1} (q_1)
    (q_1) edge [trans, loop above] node {0} ();
    
\begin{scope}[shift={(0,-4)}] 
   \node[state,initial, accepting] (q_0)   {}; 
   \node[state,accepting](q_1) [right=of q_0] {};
    \path[->] 
    (q_0) edge [trans, loop above] node {$\begin{pmatrix} 0\\ 0\end{pmatrix}$} ()
          edge [trans] node {$\begin{pmatrix} 1\\ 0\end{pmatrix}\!,   \begin{pmatrix} 1\\ 1\end{pmatrix}$} (q_1)
    (q_1) edge [trans, loop above] node {$\{0,1\}^2$} ();
\end{scope}
    
\begin{scope}[shift={(0,-8)}] 
   \node[state,initial, accepting] (q_0)   {}; 
    \path[->] 
    (q_0) edge [trans, loop above] node {$\begin{pmatrix} 0\\ 0\end{pmatrix}$, $\begin{pmatrix} 0\\ 1\end{pmatrix}$, $\begin{pmatrix} 1\\ 1\end{pmatrix}$} ();
\end{scope}
\end{tikzpicture}
\end{center}
\caption{The automata $\Aa_\Sing$, $\Aa_{\min}$, and $\Aa_{{\subseteq}}$. The initial states of the automata are indicated by incoming arrows. The accepting states are marked by double circles. The transitions are represented by arrows, labelled by the respective letters.}
\label{fig:autsingmin}
\end{figure}
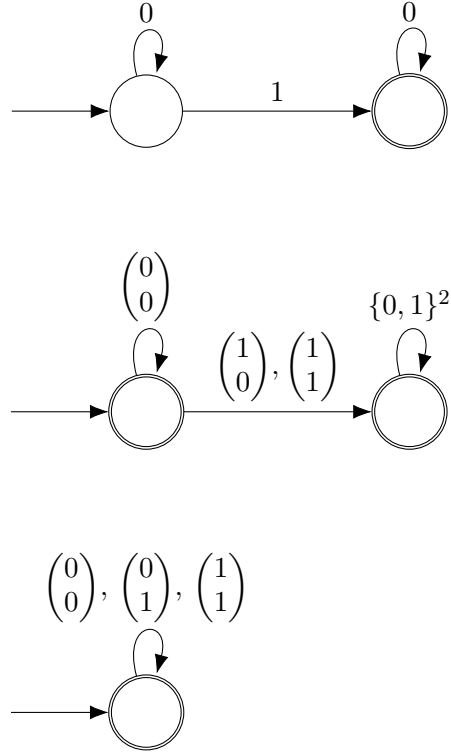

\begin{enumerate}

\item  The base case is for atomic subformulae of the form $\Sing(X)$, $\min(X) \le \min(Y)$, and $X\subseteq Y$. To these, the algorithm assigns the automata 
$\Aa_{\Sing}$, $\Aa_{\min}$, and $\Aa_{{\subseteq}}$, respectively, pictured in Figure~\ref{fig:autsingmin}. It is straightforward to verify in $\rca$ that the only situation in
which $\Aa_{\Sing}$ accepts a word over $\{0,1\}$ is if it encounters a~single position labelled $1$, switches to the accepting state, and remains in that state
by reading an infinite string of $0$'s. This happens exactly if the word represents a singleton set $X$. 
Similarly, it is straightforward to verify that a word over $\{0,1\}^2$ representing two sets $X, Y$ is accepted by $\Aa_{\min}$ (resp. $\Aa_{{\subseteq}}$) exactly if it is not the case
that $1$ appears on the second coordinate before it appears on the first coordinate (resp.~that $1$ appears on the first coordinate with $0$ on the second). This is just what is needed to recognise the property $Y{=}\emptyset\lor \min(X) {\le} \min(Y)$ (resp.~the property $X\subseteq Y$).


\item Automata corresponding to $\neg \Sing(X)$, $\neg \min(X) \le \min(Y)$, and $\neg X\subseteq Y$ are constructed using the algorithm for complementation.

\item Given formulae $\psi_i$, $1 \le i \le k$, and corresponding automata $\Aa_i = \langle Q_i , \{0,1\}^\ell , q_{\init_i} , \delta_i , F_i \rangle$, 
the automaton corresponding to $\bigvee_{1\le i \le k} \psi_i$ is $\bigvee_i \Aa_i := \langle \{ q_0 \} \sqcup \bigsqcup_i Q_i, \{0,1\}^\ell, q_0, \delta' \sqcup \bigsqcup_i \delta_i, \bigsqcup F_i  \rangle$ where the set $\delta'$ is $\{(q_0, a, q) \mid \exists i\!\le\! k.\ (q_{\init_i}, a, q) \in Q_i \}$.
If a word $\infA$ is accepted by some $\Aa_i$ due to a run $\rho$ then $\rho'$ defined as $\rho'(0)=q_0$ and as $\rho$ everywhere else is an accepting run of $\bigvee_i \Aa_i$ over $\infA$. Conversely, if $\rho \in (\{ q_0 \} \sqcup\bigsqcup_i Q_i)^\Nn$ is an accepting run of $\bigvee_i \Aa_i$ over $\infA$, 
then $\rho(1)$ belongs to $Q_i$ for some $i$. Then for $j>0$ each $\rho(j)$ also belongs to $Q_i$ and all corresponding transitions agree with 
$\delta_i$. Defining $\rho'$ by $\rho'(0) = q_{\init_i}$ and as $\rho$ everywhere else yields an accepting run of $\Aa_i$ over $\infA$.

\item If the automaton $\Aa = \langle Q , \{0,1\}^{k + \ell} , q_\init , \delta , F \rangle$ corresponding to $\psi(\bar X, \bar Y)$, 
then the automaton corresponding to $\exists \bar Y  \psi$ is $\exists \Aa := \langle Q, \{0,1\}^k, q_\init , \delta_\exists, F \rangle$ with 
$\delta_\exists := \{ (q, (a_1, \dots, a_k) , q') \mid \exists \bar b.\, (q, (a_1, \dots, a_k, b_1, \dots, b_\ell), q') \in \delta \}$. We argue that 
$\exists \Aa$ accepts a word $\infA$ if and only if there exists some $\infB \in (\{0,1\}^\ell)^\Nn$ such that $\Aa$ accepts $\infA \otimes \infB$,
where $\otimes$ stands for (coordinate-wise) concatenation of finite sequences.
Indeed, suppose that $\infA$ is accepted by $\exists \Aa$ using an accepting run $\rho \in Q^\Nn$. 
By the definition, this means that for every $j$ there exists $\bar b \in \{0,1\}^\ell$ such that $(\rho(j), \infA(j) \otimes \bar b, \rho(j + 1)) \in \delta$. 
Use $\Delta^0_1$-comprehension to define an infinite word $\infB$ by picking a minimal such $\bar b$ as $\infB(j)$ for every $j$. Then $\rho$ is an accepting  run of $\Aa$ over $\infA \otimes \infB$. 
Conversely, it is clear that an accepting run $\rho$ of $\Aa$ over $\infA \otimes \infB$ is an accepting run of $\exists \Aa$ over $\infA$.

\item Finally, the formula $\forall \bar X.\, \bigwedge_{i=1}^k \xi_i$ is equivalent to $\neg \exists \bar X.\, \bigvee_{i=1}^k \neg \varphi_i$. 
The automaton corresponding to it is built by means of constructions 3 and 4 and two rounds of complementations
\end{enumerate}
Clearly, we can argue by induction on $m \le n$ that for all subformulae $\psi$ of $\varphi$ at depth $m$, the automaton $\Aa_\psi$ is equivalent to $\psi$. 
In particular, $\Aa_\varphi$ is equivalent to $\varphi$.

It remains to deduce decidability of the depth-$n$ fragment of $\mso(\Nn,{\le})$. 
Given an algorithm transforming a depth-$n$ $\mso$ formula to an equivalent automaton, it suffices to 
show decidability of the emptiness problem for B\"uchi automata: 
``given a nondeterministic B\"uchi automaton $\Aa$, does there exist an infinite word accepted by $\Aa$?''
As is well known, the answer is positive exactly if $\Aa$ contains a state $q$ which is reachable from
the initial state $q_I$ and has the property that $q$ can be reached from $q$ via a path containing an accepting state.
The standard argument proving this formalises in $\rca$ in an unproblematic way.
\end{proof}

\section{Decidability implies \texorpdfstring{$\Sigma^0_2\ind$}{IND}}
\label{sec:dec-to-ind}

In this section we prove the following result.

\newcommand{\thmDecToInd}{Over $\rca$, the decidability of the depth\=/$5$ fragment of the theory $\mso(\Nn, {\le})$ implies $\Sigma^0_2\ind$.}

\begin{proposition}
\label{thm:dec-to-ind}
\thmDecToInd
\end{proposition}

This is, of course, implication $\ref{it:dec_mso} \to \ref{it:induction}$ of Theorem~\ref{thm:main}. The proof of the implication is based on two
observations which deserve to be stated as separate lemmas. 

The first lemma explains one way in which the decidability of the MSO theory of some structure can be used to derive some nontrivial principles. 
Basically, properties corresponding to families of MSO sentences are decidable (in particular, $\Sigma^0_1$), and therefore mathematical induction can be applied to them.

\begin{lemma}\label{lem:induction-made-simple}
For every $n \in \omega$, the following is provable in $\rca$. Let $(\psi_i)_{i \in \Nn}$ be a sequence of depth\=/$n$ MSO sentences and let 
$\mathbb{A}$ be a structure such that the depth\=/$n$ fragment of the theory $\mso(\mathbb{A})$ is decidable. If $\psi_0 \in \mso(\mathbb{A})$
and if $\psi_i \in \mso(\mathbb{A})$ implies $\psi_{i+1} \in \mso(\mathbb{A})$ for each $i \in \Nn$, then $\psi_i \in \mso(\mathbb{A})$ for each $i \in \Nn$. 
\end{lemma}

\begin{proof}
It is enough to note that the property ``$\psi_i \in \mso(\mathbb{A})$'' can be expressed by a $\Sigma^0_1$ $L_2$-formula $\varphi(i)$ 
(and, in fact, by a $\Pi^0_1$ formula too), and $\Sigma^0_1\ind$ is available.
\end{proof}

The second lemma will provide us with a concrete MSO\=/expressible property to which the first lemma can be applied.

\begin{lemma}\label{lem:pi-2-complete-word}
Let $\pi(i)$ be the $\Pi^0_2$ formula $\forall x \, \exists y. \,\delta(i,x,y)$, where $\delta(i,x,y)$ is~$\Delta^0_0$, possibly with parameters. Then $\rca$ proves that for every $k \in \Nn$,
there exists a word $\infA$ over the alphabet $\{0,\ldots,k+1\}$ such that for each 
$i\le k$ and $v\in\Nn$ the letter $i+1$ appears in $\infA$ at least $v$ times if and only if $\forall x\!<\!v\, \exists y.\, \delta(i,x,y)$. 
In particular, $i+1$ appears in $\infA$ infinitely many times if and only if $\pi(i)$ holds.  
\end{lemma}

\begin{proof}
We reason in $\rca$. 
Given some $k \in \Nn$, we define a function
$C$ with domain $\{0,\ldots,k\} \times \Nn$ by letting $C(i,w) = \max\big\{v\leq w\mid \forall x\!<\!v\, \exists y\!<\! w.\, \delta(i,x,y)\big\}$
for $i \leq k$ and $w \in \Nn$. Clearly the function $C$ is computable and so exists by $\Delta^0_1$-comprehension.

Given some computable enumeration\footnote{$(n, k) \mapsto \frac{(n + k + 1)(n + k)}{2} + k$ is a simple enough example.} of pairs $\fun{\lbracket \cdot , \cdot \rbracket}{\Nn^2}{\Nn}$ that is monotone with respect to the coordinatewise order on $\Nn^2$, define the infinite word $\infA$ by:
\[\infA(j) =
\begin{cases}
i+1 & \text{if $j=\lbracket i, w \rbracket$, $i\leq k$,}\\
  & \text{ and $C(i,w)>\big|\{w'<w\mid\infA(\lbracket i, w'\rbracket)=i+1\}\big|$,}\\
0 & \text{otherwise}.
\end{cases}
\]
Again, $\infA(j)$ is computable so $\alpha$ can be obtained by $\Delta^0_1$-comprehension. Note that $\alpha\lbracket i', w \rbracket = i+1$ implies $i' = i$ for any $i,i'$. We now verify that $\infA$ satisfies the requirements
of the lemma.

First assume that $\forall x\!<\!v\, \exists y.\, \delta(i,x,y)$ holds for some $i\leq k$ and $v\in\Nn$. By $\Sigma^0_1$-collection, there exists some $w$ such that $\forall x\!<\!v \, \exists y\!<\!w.\, \delta(i, x, y)$. Let $\ell=\big|\{w'<w\mid\infA(\lbracket i, w'\rbracket)=i+1\}\big|$. If $\ell\geq v$ then we are done. Assume the contrary and notice that $C(i,w)\geq v$. This means that for $w'=w,w+1,\ldots,w+v-\ell-1$ we have $\infA(\lbracket i, w'\rbracket)=i+1$ (we use $\Sigma^0_1\ind$ to prove this). In total this gives us $v$ positions of $\infA$ that are labelled by $i+1$.

Now assume that there are at least $v$ positions of $\infA$ labelled by $i+1$. Let $w_0$ be the minimal position such that $\big|\{w'\leq w_0\mid\infA(\lbracket i, w'\rbracket)=i+1\}\big|= v$.
We know that $\infA(\lbracket i, w_0\rbracket)=i+1$ and that the set $\{w'<w_0\mid\infA(\lbracket i, w'\rbracket)=i+1\}$ has $v-1$ elements. This means that $C(i,w_0)\geq v$. By the definition of $C(i,w)$, it follows that $\forall x\!<\!v\, \exists y.\, \delta(i,x,y)$ holds.
\end{proof}

To complete the proof of Proposition~\ref{thm:dec-to-ind}, we will use Lemma~\ref{lem:pi-2-complete-word} to show that if the depth\=/$5$ fragment 
of $\mso(\Nn, {\le})$ is decidable, then Lemma~\ref{lem:induction-made-simple} 
can be applied to a~sequence of \mso sentences $(\psi_k)_{k \in \Nn}$ where $\psi_k$ basically says  ``$\Pi^0_2$ induction holds up to $k$''.

\begin{proofof}{Proposition~\ref{thm:dec-to-ind}}
For $k\in\Nn$, let $\psi_k$ be the $\mso$ sentence ``for every infinite word over the alphabet $\{0,\ldots,k\}$ there is a maximal letter $i\in\{0,\ldots,k\}$ occurring infinitely often''. More formally, $\psi_k$ is defined to be the depth\=/$5$ sentence
\begin{multline}
\forall X_0\, \forall X_1\, \ldots \forall X_k\, \left[\forall x\, \left( \bigvee_{i\le k} x\in X_i \land \bigwedge_{i<j\le k} \lnot\big(x\in X_i\land x\in X_j\big)\right)\right.\Longrightarrow\label{eq:formula}\\
\left.
\bigvee_{i \le k} \left((\forall x \, \exists y\! \ge\! x.\, y\in X_i) \wedge \bigwedge_{i<j\le k} (\exists x \, \forall y\!\ge\! x.\, y\notin X_j)\right)\right].\nonumber
\end{multline}
Clearly, $\rca$ proves that $\psi_0 \in \mso(\Nn,{\le})$ and for every $k\in\Nn$, if $\psi_k \in \mso(\Nn,{\le})$, then $\psi_{k+1} \in \mso(\Nn,{\le})$. So, by Lemma~\ref{lem:induction-made-simple} and the assumption on decidability of depth\=/$5$ $\mso(\Nn, {\le})$, each sentence $\psi_k$ is true in 
$(\Nn,{\le})$.

Now consider a $\Pi^0_2$ formula $\pi(i)$, possibly with parameters. Let $k \in \Nn$ and assume that $\pi(0)$ but $\neg \pi(k)$. Let $\alpha$ be
the word corresponding to $\pi$ and $k$ provided by Lemma~\ref{lem:pi-2-complete-word}. 
Since the MSO sentence $\psi_{k+1}$ is true in $(\Nn,{\le})$, 
there is a maximal letter $i$ appearing in $\alpha$ infinitely often. Clearly $0 < i < k+1$ and 
$\pi(i-1)$ but $\neg \pi(i)$.

Since $\pi(i)$ was an arbitrary $\Pi^0_2$ formula, we have proved $\Pi^0_2\ind$ and thus also $\Sigma^0_2\ind$.
\end{proofof}


\section{Making complementation ineffective}
\label{sec:two-compl}

The work of Sections~\ref{sec:ind-to-ramsey}--\ref{sec:dec-to-ind} proves the equivalence of items \ref{it:induction}, \ref{it:ramsey}, 
\ref{it:compl_buchi} and \ref{it:dec_mso} of Theorem \ref{thm:main}. However, \ref{it:compl_buchi}, concerning 
complementation of B\"uchi automata, contains an effectivity condition, namely that there exists an algorithm that produces 
an automaton complementing any given input automaton $\Aa$. 
It is natural to ask whether this effectivity condition 
can be dropped without compromising the logical strength of the statement. 

Below, we prove that the answer is positive, and therefore also item \refComplBuchiPrime~of Theorem \ref{thm:main}
is equivalent to the others. Our argument relies on the ideas of Section \ref{sec:dec-to-ind} and is similar in spirit to the one used in the proof of \cite[Theorem 3.1, (2) $\to$ (3)]{km:2016}, though somewhat simpler. Clearly \ref{it:compl_buchi} implies \refComplBuchiPrime. Hence, it is enough to show for instance that \refComplBuchiPrime~implies \ref{it:induction}:
\begin{proposition}
Provably in $\rca$, if for every nondeterministic B\"uchi automaton $\Aa$ there exists a B\"uchi automaton $\Bb$ such that for every infinite word $\infA$, 
$\Bb$ accepts $\infA$ exactly if $\Aa$ does not accept $\infA$, then $\Sigma^0_2\ind$ holds.  
\end{proposition}

\begin{proof}
Assume $\Sigma^0_2\ind$ fails and let $\pi(i)$ be a $\Pi^0_2$ formula such that $\pi(0)$ and $\pi(i) \to \pi(i+1)$ for each $i$, but $\neg \pi(k)$ for some $k$.
By Lemma \ref{lem:pi-2-complete-word} this means that there is a word $\alpha$ over the alphabet $\{0,\ldots,k+1\}$ such that there is no
maximal letter $i \le k+1$ appearing infinitely often in $\alpha$.

Consider the following B\"uchi automaton $\Aa$ working over $\{0,\ldots,k+1\}$: at some point, $\Aa$ nondeterministically chooses a letter $i$ and verifies that from that point onwards, $i$ appears infinitely many times but no $j > i$ appears at all. Apply complementation to obtain an automaton $\Bb$ which accepts exactly if $\Aa$ rejects.

Note that $\Aa$ rejects the word $\alpha$, because no matter when it makes its nondeterministic choice and what letter $i$ it chooses, 
either $i$ will appear only finitely many times or some $j>i$ will appear after the choice is made. Therefore, $\Bb$ has an accepting run on some word, namely on $\alpha$. By a standard application of the (finite) pigeonhole principle
$\ell + p$, it chooses the maximal letter occurring as one of $\beta(\ell), \ldots, \beta(\ell + p-1)$. This contradicts the assumption that $\Bb$ accepts exactly if $\Aa$ rejects.
\end{proof}

\section{Additive Ramsey and Ordered Ramsey imply \texorpdfstring{$\Sigma^0_2\ind$}{IND}}
\label{ap:ramsey-to-ind}

In this section, we give a direct proof showing that both Additive Ramsey's Theorem and Ordered Ramsey's Theorem imply $\Sigma^0_2\ind$. The implication from Additive Ramsey already follows from Theorem~\ref{thm:main}. However, the argument below is very simple 
and establishes a direct link between our Ramsey-theoretic statements and the induction scheme, without the detour through automata and MSO; thus, we feel it is worth including.

\begin{proposition}
\label{lem:ramsey-to-ind}
Over $\rca$, both Additive Ramsey's Theorem and Ordered Ramsey's Theorem imply $\Sigma^0_2\ind$.
\end{proposition}

\begin{proof}
By Lemma~\ref{lem:pi-2-complete-word}, to derive $\Sigma^0_2\ind$ it is enough to show that for every $k\in\Nn$ and every infinite word $\infA\in\{0,\ldots,k\}^\Nn$, there is a maximal letter $i$ appearing infinitely many times in $\infA$. Fix $k$ and $\infA$ and consider the colouring $\colA$ with values in $\{0,\ldots,k\}$ defined for $i<j$ as follows:
\[\colA(i,j)=\max\{\infA(\ell)\mid i\leq \ell <j\}.\]
The colouring $\colA$ can be viewed both as an additive colouring of $[\Nn]^2$ by elements of the semigroup $(\{0,\ldots,k\},\max)$, or as an ordered colouring w.r.t.\ the inverse of the usual order on $\{0,\ldots,k\}$. Thus, we can use either Additive Ramsey's Theorem or Ordered Ramsey's Theorem to obtain an infinite homoneous set $I$ for $\colA$. Let $i\in\{0,\ldots,k\}$ be the colour of $I$. By the definition of $\colA$, $i$ is the largest colour that appears infinitely many times in~$\infA$.
\end{proof}

\section{Additive Ramsey and Wilke algebras}
\label{sec:ind-to-wilke}

Wilke algebras provide an algebraic framework~\cite{wilke_algebraic,PinPerrin} for studying regular languages of infinite words. The crucial result concerning Wilke algebras is the extension principle (see Corollary~2.6 in~\cite{wilke_algebraic} and Theorem~5.1 in~\cite{PinPerrin}), saying that each semigroup morphism taking finite words over a~given alphabet into a finite Wilke algebra has a unique extension to all infinite words.
This extension property essentially expresses Additive Ramsey's Theorem, modulo some algebraic manipulation on Wilke algebras.
In the following section we formally state this fact in terms of reverse mathematics, by proving that over $\rca$ the Wilke extension principle is equivalent to Additive Ramsey's Theorem. Thus, the correct behaviour of yet another standard model of recognition of regular languages of infinite words turns out to be equivalent to $\Sigma^0_2\ind$.

\newcommand{\Sp}{S_{{+}}}
\newcommand{\Si}{S_{{\infpow}}}

We start by introducing the basic notions, for reference see~\cite{wilke_algebraic}. A Wilke algebra (originally called right binoid, see~\cite{wilke_algebraic}) is a pair $(\Sp,\Si)$ with operations: an associative operation ${\cdot_{{+}}}$ on $\Sp$; a left semigroup action ${\cdot_{{\infpow}}}$ of $\Sp$ on $\Si$; and a mapping $.^\infpow$ from $\Sp$ to $\Si$. Additionally, a Wilke algebra must satisfy the natural axioms denoted (PU), (RO) in~\cite{wilke_algebraic}.
The \emph{rotation law} (RO) states that, for every $x,y \in \Sp$, $(xy)^\infpow = x \cdot_\infpow (yx)^\infpow$. The \emph{pumping law} (PU) states that, for every $x \in \Sp$ and $n \ge 1$, $(x^n)^\infpow = x^\infpow$. Finite Wilke algebras (i.e., where both $\Sp$ and $\Si$ are finite) are the counterpart to finite\=/state B\"uchi automata in terms of recognisability of languages over infinite words.



The Wilke extension principle, as stated in~\cite[Corollary~2.6]{wilke_algebraic}, says that for every alphabet $\Sigma$ and finite Wilke algebra $(\Sp,\Si)$, each semigroup homomorphism $\fun{f}{\Sigma^{{+}}}{\Sp}$ has a unique extension to a homomorphism of Wilke algebras $(\Sigma^{{+}},\Sigma^\Nn)\to(\Sp,\Si)$ that is additionally \emph{Ramsey} (see Theorem~1.2.(C) in~\cite{wilke_algebraic}): $f(U_0)^\infpow=f(U_0U_1\cdots)$ for every sequence of finite words with $f(U_0)=f(U_1)=\ldots$ 

As $L_2$ is not expressive enough to manipulate such morphisms $f$ as first-class objects, we need to reformulate the extension principle in $L_2$. First, we say that a decomposition of an infinite word $\infA\in \Sigma^\Nn$ info a sequence of non-empty words $\infA=U_0 U_1\cdots$ is \emph{weakly Ramseyan w.r.t.\ $\fun{f}{\Sigma^{{+}}}{\Sp}$} if $f(U_1)=f(U_2)=\ldots$. If additionally $f(U_0)\cdot f(U_1) = f(U_0)$ and $f(U_1)\cdot f(U_2)=f(U_1)$ then we say that the decomposition is \emph{Ramseyan}. 
Essentially by the definition, Additive Ramsey's Theorem says that every infinite word admits a~particular kind of a weakly Ramseyan decomposition.

\begin{definition}
\label{def:wilke-ext}
The \emph{Wilke extension principle} says: for every alphabet $\Sigma$ and finite Wilke algebra $(\Sp,\Si)$ with a semigroup homomorphism $\fun{f}{\Sigma^{{+}}}{\Sp}$, for every $\infA\in \Sigma^\Nn$, there exists a~unique value $x\in \Si$ such that for every weakly Ramseyan decomposition $\infA=U_0 U_1\cdots$ of $\infA$ w.r.t.\ $f$, we have
\begin{equation}
\label{eq:ramseyan}
x = f(U_0)\cdot \big(f(U_1)\big)^\infpow.
\end{equation}
The above defined value is called \emph{the value} of the decomposition $U_0,U_1,\ldots$
\end{definition}

Now, let us state the main theorem of this section.

\begin{theorem}
\label{thm:wilke-and-ramsey}
Over $\rca$, the Wilke extension principle is equivalent to Additive Ramsey's Theorem, and hence also to \ref{it:induction}--\ref{it:dec_mso} of Theorem \ref{thm:main}.
\end{theorem}

Before we move on, first notice the following two simple lemmas.

\begin{lemma}
\label{lem:exists-ramsey}
Over $\rca$, the Wilke extension principle implies that if $f$ is a semigroup homomorphism $\fun{f}{\Sigma^{{+}}}{\Sp}$ then for every $\infA$ there exists at least one weakly Ramseyan factorisation of $\infA$ w.r.t.\ $f$.
\end{lemma}

\begin{proof}
Given a semigroup $\Sp$ and a homomorphism $f$, let $(\Sp,\Si)$ be the Wilke algebra with $\Si=\{0,1\}$, $r^\infpow=0$ for every $r\in\Sp$, and $r\cdot_{\infpow} s= s$ for every $r\in\Sp$ and $s\in\Si$. Now assume that for some $\infA\in \Sigma^\Nn$ there is no weakly Ramseyan decomposition w.r.t.\ $f$. In that case both values $x=0$ and $x=1$ satisfy~\eqref{eq:ramseyan}, which violates the condition of uniqueness in the Wilke extension principle.
\end{proof}

\begin{lemma}
\label{lem:weakly-to-ramseyan}
Over $\rca$, if $\infA=U_0U_1\cdots$ is a weakly Ramseyan decomposition of $\infA$ w.r.t. $f$ then there exists a Ramseyan decomposition of $\infA$ w.r.t. $f$ of the same value.
\end{lemma}

\begin{proof}
Consider a weakly Ramseyan decomposition $\infA=U_0U_1\cdots$ Let $N\eqdef|\Sp|!$ be the factorial of the size of the semigroup. It is known that for every $r\in\Sp$ we have $r^N=r^N\cdot r^N$. Now it is enough to group the decomposition into blocks of size $N$, i.e.: $V_0=U_0U_1\cdots U_N$ and $V_K=U_{K\cdot N + 1}U_{K\cdot N+2}\cdots U_{K\cdot N+N}$ for $K=1,2,\ldots$. The new decomposition is also weakly Ramseyan with $f(V_1)=f(V_2)=\ldots=f(U_1)^N$. Additionally, we know that $f(V_0)\cdot f(V_1)=f(V_0)$ and $f(V_1)\cdot f(V_2)=f(V_1)$. The value of the new decomposition is the same as the value of the original one by the axioms (PU) and (RO) of Wilke algebras.
\end{proof}

\begin{corollary}
Over $\rca$, the Wilke extension principle implies Additive Ramsey's Theorem.
\end{corollary}

\begin{proof}
Consider an additive colouring $\fun{C}{[\Nn]^2}{S}$, with values in a finite semigroup $S$. Let $\infA\in S^\Nn$ be the infinite word defined as $\infA(k)=C(k,k{+}1)$ and let $\fun{f}{S^{{+}}}{S}$ be the product homomorphism: $f(s_1s_2\cdots s_k)=s_1s_2\cdots s_k$. It is easy to see that for $i<j\in\Nn$ we have $C(i,j)=\infA(i)\infA(i+1)\cdots\infA(j-1)=f\big(\infA(i)\infA(i+1)\cdots\infA(j-1)\big)$.

Apply Lemma~\ref{lem:exists-ramsey} to obtain a~weakly Ramseyan decomposition $\infA=U_0U_1\cdots$ Lemma~\ref{lem:weakly-to-ramseyan} shows that there must also exist a~Ramseyan decomposition of $\infA=U_0'U_1'\cdots$. Such a decomposition induces a homogeneous set for $C$, given as $I=\{|U_0'|, |U_0'U_1'|,\ldots\}$.
\end{proof}

What remains is to prove the opposite implication of Theorem~\ref{thm:wilke-and-ramsey}, as expressed by the following lemma.

\begin{lemma}
Over $\rca$, Additive Ramsey's Theorem implies the Wilke extension principle.
\end{lemma}

\begin{proof}
The proof of this lemma follows the standard strategy of proving the Wilke extension principle. However, we need to pay attention to make sure that the argument can be done in $\rca$.

Consider a~Wilke algebra $(\Sp,\Si)$, a~homomorphism $\fun{f}{\Sigma^{{+}}}{\Sp}$, and an infinite word $\infA$. Let $C$ be the~colouring defined by \[C(i,j)\eqdef f\big(\infA(i)\infA(i+1)\cdots\infA(j-1)\big)\] for $i<j\in\Nn$. By Additive Ramsey's Theorem, there exists an infinite set $I=\{i_0<i_1<\ldots\}$ homogeneous for $C$. This set provides a~Ramseyan decomposition $\infA=U_0'U_1'\cdots$, with $U'_{k+1}=\infA(i_k)\cdots\infA(i_{k+1}{-1})$. Let $x_0$ be the value of this decomposition. We claim that $x_0$ satisfies~\eqref{eq:ramseyan} for every weakly Ramseyan decomposition of $\infA$.

Consider any weakly Ramseyan decomposition $\infA=U_0U_1\cdots$ By Lemma~\ref{lem:weakly-to-ramseyan} we can assume that the decomposition is in fact Ramseyan. Let $s=f(U_0)$, $s'=f(U_0')$, $e=f(U_1)$, $e'=f(U_1')$. We need to prove that $se^\infpow=s'(e')^\infpow$. But, by the decompositions $(U_n)_{n\in\Nn}$ and $(U_n')_{n\in\Nn}$, we know that $s'=sx$ for some $x\in\Sp$ 
(notice that $s = s'y$ implies that there is $x$ such that $s'= s'e'=s'yx=sx$), 
$xe'$ is $\Rr$-equivalent to $e$ and $e$ is $\Jj$-equivalent to $e'$ ($\Rr$ and $\Jj$ are Green relations, see Definition~\ref{def:green}). Now, Proposition~2.7 from~\cite{PinPerrin}, which is proved using finite combinatorics, shows that the pairs $(s,e)$ and $(s',e')$ are \emph{conjugate}, meaning that there exist $x,y\in \Sp$ such that $e=xy$, $e'=yx$ and $s'=sx$. In that case, using the axioms of Wilke algebras, we get that:
\[
  s e^\infpow = s (xy)^\infpow= s x (yx) ^ \infpow = s' (e')^\infpow.
  \tag*{\qedhere}
\]
\end{proof}

\subsection*{Relation to Ramsey homomorphisms} As discussed above, the original statement of the Wilke extension principle from~\cite{wilke_algebraic} cannot be expressed in $L_2$. For the sake of this section, we proposed a simplified version of this principle (Definition~\ref{def:wilke-ext}). The rest of this section is devoted to informally arguing that the Wilke extension principle in fact easily implies the original principle. Let us assume the Wilke extension principle. Take a homomorphism $\fun{f}{\Sigma^{{+}}}{\Sp}$. Consider $\fun{f}{\Sigma^\Nn}{\Si}$ defined as $f(\infA)=x$ from the definition of the Wilke extension principle. Clearly $f$ is Ramseyan in the sense of~\cite{wilke_algebraic}. We need to argue that $f$ is a homomorphism and that it is unique. First notice that for every $w\in \Sigma^{{+}}$ we have $f(www\cdots) = f(w)^\infpow$ by~\eqref{eq:ramseyan}. Now consider $\beta=w\cdot \infA$ for a finite word $w\in \Sigma^{{+}}$. By Lemma~\ref{lem:exists-ramsey} and Lemma~\ref{lem:weakly-to-ramseyan} (both provable in $\rca$), the Wilke extension principle implies that $\infA= U_0U_1\cdots$ for a Ramseyan decomposition. In that case $\beta = w U_0U_1\cdots$ and $f(\beta)=f(w)f(U_0)f(U_1)^\infpow=f(w)\cdot f(\infA)$. Thus, $f$ is a homomorphism.

Now assume that $\fun{f'}{\Sigma^\Nn}{\Si}$ is another Ramseyan homomorphism. We need to prove that $f'=f$. Consider an infinite word $\infA$. Notice that since $f'$ is Ramseyan, the value $x=f'(\infA)$ satisfies~\eqref{eq:ramseyan}. Thus, by the uniqueness of $x$ we know that $f'(\infA)=f(\infA)$.
%

\begin{remark}
Murakami et al.~\cite{murakami-et-al:ramseyan-factor} consider a statement they call the \emph{weak Ramseyan factorisation theorem}, which says essentially that every infinite word has a weak Ramseyan decomposition w.r.t.~any finite colouring, i.e.~without the additivity requirement. They show that already the weak factorisation theorem for two colours implies a set existence principle known as $\mathsf{ADS}$, which is unprovable in $\rca + \bigcup_{n\in \omega} \Sigma^0_n\ind$. This is another confirmation of the crucial role of the additivity condition in the Ramsey\=/like statements used in automata theory.  
\end{remark}


\section{\texorpdfstring{$\Sigma^0_2\ind$}{IND} implies Bounded-width K\"onig}
\label{sec:ind-to-konig}

%
%
\newcommand{\thmIndToKonig}{Over $\rca$, $\Sigma^0_2\ind$ implies Bounded-width K\"onig's Lemma (see Definition~\ref{def:bounded-konig}).}

\begin{theorem}
\label{thm:ind-to-konig}
\thmIndToKonig
\end{theorem}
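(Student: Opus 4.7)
The plan is to build the infinite path iteratively by selecting, at each level, a successor from which arbitrarily long extensions exist, and to use $\Sigma^0_2\ind$ to legitimize this construction despite the fact that ``arbitrarily long extensions exist'' is not decidable in $\rca$.

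First, I would introduce the predicate $\mathrm{good}(q, i)$, stating that for every $n \in \Nn$ there is a path in $G$ of length at least $n$ starting at the vertex $(q, i)$. Since candidate paths of length $n$ lie in the bounded search space $Q^n$, the formula $\mathrm{good}(q, i)$ is genuinely $\Pi^0_1$ in the parameters $G$ and $Q$. The hypothesis of Bounded-width K\"onig's Lemma asserts precisely that $\mathrm{good}(q, 0)$ holds.

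Second, I would prove, in $\rca$ alone, the following successor lemma: whenever $\mathrm{good}(q, i)$ holds, there exists $q' \in Q$ with $((q, i), (q', i+1)) \in G$ and $\mathrm{good}(q', i+1)$. Arguing by contradiction, suppose every successor $q'$ fails to be good. Each such $q'$ then carries a witness $N_{q'}$ bounding the length of paths from $(q', i+1)$; since $\lnot\mathrm{good}$ is $\Sigma^0_1$ and $q'$ lives in the finite set $Q$, bounded $\Sigma^0_1$-collection (provable in $\rca$) delivers a common bound $N$. But then any path leaving $(q, i)$ has length at most $N+1$, contradicting $\mathrm{good}(q, i)$.

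Third, I would construct the path by setting $q_0 := q$ and, given $q_i$, taking $q_{i+1}$ to be the minimum of the bounded $\Pi^0_1$-definable set $S_i := \{q' \in Q : ((q_i, i),(q', i+1)) \in G \wedge \mathrm{good}(q', i+1)\}$. In $\rca$ the set $S_i$ exists (bounded $\Pi^0_1$-comprehension follows from bounded $\Sigma^0_1$-collection) and is nonempty by the successor lemma, so its minimum is well-defined. The step function $(q_i, i) \mapsto q_{i+1}$ is only $\Delta^0_2$-definable, however, which blocks a direct use of primitive recursion in $\rca$. Here $\Sigma^0_2\ind$ intervenes: by $\Sigma^0_2$-induction on $n$ I would prove that for each $n$ there exists a unique $\sigma_n = (q_0, \ldots, q_n) \in Q^{n+1}$ satisfying the recursion, the inductive step being exactly the minimum construction above. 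Uniqueness makes the family $(\sigma_n)_{n \in \Nn}$ coherent, and the infinite path $F \colon \Nn \to Q$ is then obtained from them as a single set using bounded $\Sigma^0_2$-comprehension, another consequence of $\Sigma^0_2\ind$.

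The main obstacle lies in the third step: managing the $\Delta^0_2$-definable step function. Both the existence of each finite approximation $\sigma_n$, whose defining property involves the $\Pi^0_1$ predicate $\mathrm{good}$, and the assembly of these approximations into one infinite sequence within the theory essentially rely on $\Sigma^0_2\ind$. Neither task can be carried out in pure $\rca$, which is consistent with the fact that Bounded-width K\"onig's Lemma is itself not provable there.
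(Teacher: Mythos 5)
Your first two steps are sound and essentially match the paper's: your predicate $\mathrm{good}(q,i)$ is the paper's set $G'$ with its $\Pi^0_1$ definition, and your successor lemma is Fact~\ref{ft:safety-of-tprime}. The gap is at the very end of your third step. By $\Sigma^0_2\ind$ you do obtain, for each $n$, the unique finite approximation $\sigma_n$ (each $\sigma_n$ is a number, and the family is coherent), but this does not yield the infinite path as a \emph{set}. The graph of your intended path $F$ is defined by ``there exists $\sigma_i$ satisfying the recursion with $\sigma_i(i)=q$'', which is $\Sigma^0_2$ and, by uniqueness, also $\Pi^0_2$; forming $F$ therefore requires $\Delta^0_2$-comprehension. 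Bounded $\Sigma^0_2$-comprehension, which is all that $\Sigma^0_2\ind$ provides, produces only the finite sets $\{x \le w : \varphi(x)\}$ coded as numbers; it cannot glue the $\sigma_n$ into one infinite second-order object. And unrestricted $\Delta^0_2$-comprehension is genuinely stronger than $\rca+\Sigma^0_2\ind$: the minimal model $(\omega,\mathrm{Dec})$ satisfies $\Sigma^0_2\ind$ but contains only computable sets, whereas $\Delta^0_2$-definable sets are in general only computable from the halting problem.

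What is missing is the paper's central move: before running the recursion, one shows that $\mathrm{good}$ is not merely $\Pi^0_1$-definable but also $\Sigma^0_1$-definable. This is done by saying that a vertex $(q,n)$ \emph{dies before} $m$ if no path from it reaches level $m$, using $\Sigma^0_2\ind$ to fix the maximal number $i_0$ such that infinitely many levels have at least $i_0$ dying vertices, and using $\Sigma^0_2$-collection to find a threshold $k_0$ beyond which no level has more than $i_0$ of them. A vertex is then good iff at some level $n>k_0$ one can exhibit $i_0$ vertices that die before some $m$ together with a path from the given vertex to a \emph{different} vertex at level $n$ --- a $\Sigma^0_1$ condition. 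With both definitions in hand, $\Delta^0_1$-comprehension yields the set $G'$ of good vertices, the step function of your recursion becomes computable relative to $G'$, and ordinary primitive recursion in $\rca$ finishes the proof. Without this (or some substitute argument showing that your leftmost good path is $\Delta^0_1$-definable), the construction cannot be completed in $\rca+\Sigma^0_2\ind$.
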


Simpson and Yokoyama \cite{sy:very-weak-wkl} have independently studied various weak forms of Weak K\"onig's Lemma, including principles they call
$\mathsf{WKL}(\mathrm{w}$-$\mathrm{bd})$ and $\mathsf{WKL}(\mathrm{ext}$-$\mathrm{bd})$ that can be seen to be equivalent to Bounded-width K\"onig's Lemma over $\rca$. They also prove that $\Sigma^0_2\ind$ implies these principles, and have some results on circumstances under which the implication reverses (it cannot reverse in general due to the incomparability of $\Sigma^0_2\ind$ and $\wkl$). 

\begin{proofof}{Theorem \ref{thm:ind-to-konig}}
Let us fix a graph $G$ with vertices contained in $Q\times \Nn$ for some finite set $Q$. The usual way of proving K\"onig's Lemma would start by defining the subset $G'$ of those vertices $v$ of $G$ for which the subgraph under $v$ is infinite. Having defined $G'$, we could inductively pick any infinite path in $G'$ and---assuming $G$ does in fact contain arbitrarily long finite paths starting in $Q \times \{0\}$---we are guaranteed not to get stuck. The issue is whether we can obtain $G'$ by $\Delta^0_1$-comprehension.

A $\Pi^0_1$-definition of $G'$ is provided by a standard trick used in the context of $\wkl$. Notice that for every fixed $k$ there can be at most $|Q|$ vertices of $G$ of the form $(q,k)$. Thus a vertex $(q,k)$ is in $G'$ if and only if it has the $\Pi^0_1$ property that for every $\ell\geq k$ there exists a vertex $(q',\ell)$ reachable from $(q,k)$ by a path in $G$; here the existential quantifier over $(q',\ell)$ is bounded in terms of $\ell$ and $|Q|$. 

What remains is to give a $\Sigma^0_1$-definition of $G'$.

Consider two numbers $k<\ell$ and a vertex $v=(q,k)$ of $G$. We will say that $v$ \emph{dies before $\ell$} if there is no path in $G$ from $v$ that reaches a vertex of the form $(q',\ell)$. For $i=0,1,\ldots,|Q|$ we will say that \emph{$i$ vertices die infinitely many times} if
\begin{align*}
\forall j\, \exists k\!>\!j\, \exists \ell\!>\!k.\ \text{there are at least $i$ vertices of the form $(q,k)$}\\
\text{that die before $\ell$.}
\end{align*}

Notice that the property of $i$ that \emph{$i$ vertices die infinitely many times} is $\Pi^0_2$. Clearly if $i \le i'$ and \emph{$i'$ vertices die infinitely many times} then \emph{$i$ vertices die infinitely many times}. By $\Sigma^0_2\ind$ we can fix $i_0$ as the maximal $i$ such that \emph{$i$ vertices die infinitely many times}. By the definition, if $i > i_0$ then there exists $j(i)$ such that for every $\ell>k>j(i)$ there are fewer than $i$ vertices of the form $(q,k)$ that die before $\ell$. Notice that we can assume $j_0 := j(i_0 +1)$ to be an upper bound for all $j(i)$ where $i>i_0$. This means that for $\ell>k>j_0$ we have at most $i_0$ vertices of the form $(q,k)$ that die before $\ell$. Additionally, for infinitely many $k$ there is $\ell>k$ such that exactly $i_0$ vertices of the form $(q,k)$ die before $\ell$. The following claim shows how one can find a witness that the subgraph under a vertex $v$ is infinite.

\begin{claim}
\label{clm:if-omits-then-infinite}
Assume that we are given $\ell>k>j_0$ and a vertex $v=(q,k)$ such that exactly $i_0$ vertices of the form $(q',k)$ with $q'\neq q$ die before $\ell$. Then the subgraph under $v$ is infinite.
\end{claim}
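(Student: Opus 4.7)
The plan is to deduce the infinitude of the subgraph under $v$ by showing that $v$ does not die before any level $m' \geq m$. Since ``$v$ does not die before $m'$'' means by definition that some path from $v$ in $G$ reaches a vertex at level $m'$, this will supply a vertex reachable from $v$ at every level $m' \geq m$, hence unboundedly many, so the subgraph under $v$ is infinite.

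First I would argue that $v=(q,n)$ itself does not die before $m$. By hypothesis, there are already $i_0$ distinct vertices of the form $(q',n)$ with $q' \neq q$ that die before $m$. If $v$ also died before $m$, we would get $i_0+1$ vertices of the form $(\cdot,n)$ dying before $m$, contradicting the defining property of $k_0$, which was chosen so that for all $m''>n'>k_0$ at most $i_0$ vertices at level $n'$ die before $m''$. We may apply this property here because $m > n > k_0$.

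Next, the same bound extends uniformly to every $m' \geq m$ by monotonicity of ``dying before'': if a vertex has no path in $G$ reaching level $m$, then a fortiori it has no path reaching the larger level $m'$, and so it also dies before $m'$. Hence the same $i_0$ vertices $(q',n)$ with $q'\neq q$ all die before $m'$, and repeating the counting argument (using $m' > n > k_0$) forces $v$ not to die before $m'$ either. Consequently, for each $m' \geq m$ there is a finite path in $G$ from $v$ to some vertex of the form $(q'',m')$, so the set of vertices reachable from $v$ is unbounded in the second coordinate, i.e.\ infinite.

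I do not expect any serious obstacle: the argument is a pigeonhole count comparing $i_0+1$ supposed dying vertices against the upper bound $i_0$. The only thing to verify is that it is comfortably carried out inside $\rca+\Sigma^0_2\ind$, which should be immediate---the nontrivial parameters $i_0$ and $k_0$ have already been extracted (using $\Sigma^0_2\ind$ and $\Sigma^0_2$-collection) \emph{before} the claim is invoked, and the remainder is a bounded, essentially finite combinatorial deduction using only the decidability of ``$(q',n)$ dies before $m$'' in a bounded-width graph.
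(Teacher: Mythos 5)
Your argument is correct and is essentially the paper's own proof: the paper phrases it as a direct contradiction (if $v$ died before some $m'>m$, then together with the $i_0$ vertices that die before $m$, and hence before $m'$, one would get $i_0+1$ vertices at level $n$ dying before $m'$, contradicting the choice of $k_0$), which is the same pigeonhole count plus monotonicity of ``dies before'' that you use. The only difference is presentational: you unfold the contradiction into a positive statement for every $m'\geq m$, which changes nothing of substance.
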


\begin{proof}
Assume to the contrary that for some $\ell'>\ell$ there is no vertex of the form $(q',\ell')$ that can be reached from $(q,k)$ by a path in $G$. This means that $(q,k)$ dies before $\ell'$. Therefore, there are at least $i_0+1$ vertices of the form $(q',k)$ that die before $\ell'$. This contradicts the way $j_0$ was chosen.
\end{proof}

Clearly, if for some $\ell>k$ and a vertex $v=(q,k)$ we know that $v$ dies before $\ell$ then the subgraph of $G$ under $v$ is finite.

We shall now use Claim~\ref{clm:if-omits-then-infinite} to give a $\Sigma^0_1$-definition of $G'$. We will say that $v=(q,k)$ belongs to $G'$ if there exist $\ell>k'>\max(k,j_0)$ and $i_0$ vertices of the form $(q',k')$ such that all of them die before $\ell$ and some other vertex of the form $(q'',k')$ is reachable in $G$ by a path from $v$. Clearly this is a $\Sigma^0_1$-definition. It remains to prove that it defines $G'$. First assume that $v$ satisfies the above property and fix $\ell$, $k'$, and $(q'',k')$ as in the definition. By Claim~\ref{clm:if-omits-then-infinite} we know that the subgraph under $(q'',k')$ is infinite. Since $(q'',k')$ is reachable from $v$ in $G$, this implies that also the subgraph under $v$ is infinite and thus $v\in G'$. Now assume that $v=(q,k)\in G'$. By the choice of $i_0$ we know that there exist $\ell>k'>\max(k,j_0)$ and exactly $i_0$ vertices of the form $(q',k')$ that die before $\ell$. Since the subgraph under $v$ is infinite, we know that some vertex of the form $(p,\ell)$ is reachable from $v$ in $G$. Notice that any path connecting $v$ and $(p,\ell)$ needs to contain a vertex of the form $(q'',k')$. Clearly $(q'',k')$ cannot be among the $i_0$ vertices that die before $\ell$. Thus $v$ satisfies the above condition.

We have thus shown that the graph $G'$ is indeed $\Delta^0_1$-definable, so we can use it to complete the proof. 
Let the vertex $(q,0)$ of $G$ satisfy the hypothesis of Bounded-width K\"onig's Lemma. 
Clearly, $(q,0) \in G'$. Just as clearly, each $v=(q,k)\in G'$ is connected by an edge to some $(q',k+1)\in G'$.
This lets us define an infinite path in $G'$ by primitive recursion. 
Let $\patA(0)$ be $(q,0)$. 
If $\patA(k)$ is defined let $\patA(k+1)=(q',k+1)$ for the  minimal $q' \! \in \! Q$ such that $(q',k+1)\in G'$ and there is an edge in $G$ between $\patA(k)$ and $(q',k+1)$. 
By the construction $\patA$ is an infinite path in $G'$, and hence in $G$, starting in $(q,0)$.
\end{proofof}

\section{\texorpdfstring{$\Sigma^0_2\ind$}{IND} implies determinisation}
\label{sec:ind-to-det}

The entirety of this section is devoted to a proof of the following theorem, which coincides with implication $\ref{it:induction} \to \ref{it:det_buchi}$ of Theorem \ref{thm:main}.

\begin{theorem}
\label{thm:ind-to-det}
Over $\rca$, $\Sigma^0_2\ind$ implies the existence of an algorithm which, given a nondeterministic B\"uchi automaton $\Bb$ over an alphabet $\Sigma$, outputs an equivalent deterministic Rabin automaton $\Aa$
over the same alphabet 
such that for every $\infA \in \Sigma^\Nn$ 
we have
\[\text{$\Aa$ accepts $\infA$}\quad\Longleftrightarrow\quad\text{$\Bb$ accepts $\infA$.}\]
\end{theorem}

The proof scheme presented here is based on a determinisation procedure proposed in \cite{muller_schupp} (see~\cite{Thomas,Wilke} for similar arguments and a comparison of this determinisation method to the method of Safra). Our exposition follows lecture notes of Boja{\'n}czyk~\cite{Boj}. 
Although the general structure of the argument is standard, we need to take additional care to ensure that the reasoning can be conducted in $\rca$ using only $\Sigma^0_2\ind$.

\subsection{Transducers}
\label{ssec:transducers}

The proof of Theorem~\ref{thm:ind-to-det} will be split into separate steps that will allow us to successively simplify the objects under consideration. 
The steps typically take the form of lemmas stating the existence of automata with certain properties. All the lemmas in the remainder of the section are asserted to hold provably in $\rca + \Sigma^0_2\ind$. Moreover, all automata whose existence is claimed can be obtained effectively given a nondeterministic  B\"uchi automaton $\Bb$ over the alphabet $\Sigma$ and possibly other automata mentioned in the hypothesis of each particular lemma.

To merge the steps we will use the notion of a deterministic transducer that transforms one infinite word into another. 

\begin{definition}
A transducer is a deterministic finite automaton, without accepting states, where each transition is additionally labelled by a letter from some \emph{output alphabet}. More formally, a transducer with an input alphabet $\Sigma$ and an output alphabet $\Gamma$ is a tuple $\Tt=\langle Q, q_\init, \delta\rangle$ where $q_\init\in Q$ is an initial state and $\fun{\delta}{Q\times \Sigma}{\Gamma\times Q}$. 
\end{definition}
A transducer naturally defines a function $\fun{\Tt}{\Sigma^\Nn}{\Gamma^\Nn}$. Formally, such a function is a third-order object and thus not available in second-order arithmetic. However, given a word $\alpha$, we can use $\Delta^0_1$-comprehension to obtain the unique infinite word produced by $\Tt$ on the input $\infA$. Whenever we write $\Tt(\infA)$, we have this word in mind.

It is easy to see that a transducer can be used to reduce the question of acceptance from one deterministic automaton to another, as stated by the following lemma.
\begin{lemma}
\label{lem:transducers}
For every deterministic Rabin automaton $\Aa$ with input alphabet $\Gamma$ and every transducer $\fun{\Tt}{\Sigma^\Nn}{\Gamma^\Nn}$, there exists a deterministic Rabin automaton $\Aa\circ\Tt$ which accepts an infinite word $\infA \in \Sigma^\Nn$ if and only if $\Aa$ accepts $\Tt(\infA)$.
\end{lemma}

\begin{proof}
Let the set of states of $\Aa\circ\Tt$ be the product of the states of $\Aa$ and the states of $\Tt$. The transition function of $\Aa\circ\Tt$ follows both the transitions of $\Tt$ and the transitions of $\Aa$ over letters output by $\Tt$:
\[\delta^{\Aa\circ\Tt}\big((q^\Aa,q^\Tt),a\big)=(\delta^\Aa(q^\Aa,b), q')\quad\text{where $\delta^\Tt(q^\Tt,a)=(b,q')$}.\]
The Rabin acceptance condition of $\Aa\circ\Tt$ is taken to be the acceptance of $\Aa$, skipping the second coordinate of the states. Clearly 
the first coordinate of the run of $\Aa\circ\Tt$ over an infinite word $\infA$ equals the run of $\Aa$ over $\Tt(\infA)$, so $\Aa\circ\Tt$ accepts  $\infA$ if and only if $\Aa$ accepts $\Tt(\infA)$.
\end{proof}

\subsection{\texorpdfstring{$Q$}{Q}-dags}
\label{ssec:q-dags}

In the exposition below we will work with infinite words representing the set of all possible runs of a nondeterministic automaton over a fixed infinite word. Let us define a $Q$-dag to be a directed acyclic graph where the set of nodes is $Q \times \Nn$ and every edge is of the form  
\[((q,k), (p,k+1)) \qquad \text{for some $p,q \in Q$ and $k \in \Nn$.}\]
Furthermore, every edge is coloured by one of the two colours: ``accepting'' or ``non-accepting''. We assume that there are no parallel edges. A path in a $Q$-dag is a finite or infinite sequence of nodes connected by edges (either accepting or non-accepting). As we will see, we can assume that every $Q$-dag is rooted---there is a distinguished element $q_\init\in Q$ such that all the edges of the $Q$-dag lie on a path that starts in the vertex $(q_\init,0)$. We call a vertex $(q,k)$ \emph{reachable} if there is a path from $(q_\init,0)$ to $(q,k)$ in $\infA$. We say that an infinite path in a $Q$-dag is \emph{accepting} if it starts in $(q_\init,0)$ and contains infinitely many accepting edges. 

\begin{figure}
\centering
\begin{tikzpicture}[scale=0.55]
\qDagFigure
\end{tikzpicture}
\caption{A $Q$-dag and a single letter from the alphabet $[Q]$. The accepting edges are represented by solid lines, and non-accepting edges are dashed lines.}
\label{fig:Q-dag-ap}
\end{figure}
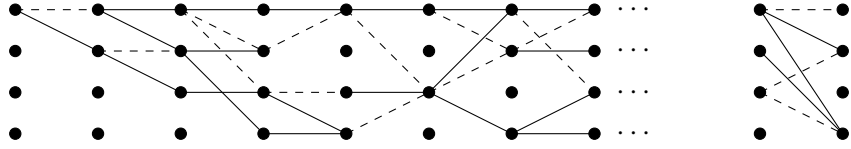

Every $Q$-dag can be naturally represented as an infinite word, where the $k$-th letter encodes the set of edges of the form $((q,k),(q',k+1))$. The alphabet used for this purpose will be the set of transition matrices $[Q]$ defined in Section~\ref{sec:intro_automata}. An example of a $Q$-dag and a letter in $[Q]$ are depicted on Figure~\ref{fig:Q-dag-ap}.

We will be particularly interested in $Q$-dags that are \emph{tree\=/shaped}. A $Q$-dag is \emph{tree-shaped} if every node $(q,k)$ has at most one incoming edge (i.e.~an edge from a node of the form $(p,k-1)$. Notice that it makes sense to say that a letter $M\in[Q]$ is tree-shaped and a $Q$-dag is tree-shaped if and only if all of its letters are tree-shaped. Figure~\ref{fig:tree-shaped-Q-dag} depicts a tree-shaped $Q$-dag.

A $Q$-dag is \emph{infinite} if for every $k$ there exists a path connecting the root $(q_\init,0)$ with a vertex of the form $(q',k)$. Similarly, a $Q$-dag is \emph{infinite under $(q,k)$} if for every $k'\geq k$ there exists a path connecting the vertex $(q,k)$ with a vertex of the form $(q',k')$.

\begin{figure}
\centering
\begin{tikzpicture}[scale=0.55]
\treeShapedQDag
\end{tikzpicture}
\caption{A tree-shaped $Q$-dag.}
\label{fig:tree-shaped-Q-dag}
\end{figure}
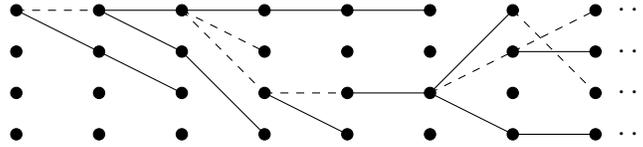

\begin{lemma}
\label{lem:to-dags}
Given a nondeterministic B\"uchi automaton $\Bb$ over an~alphabet~$\Sigma$, there exists a transducer $\Tt_1$ that takes as input an infinite word $\infA\in\Sigma^\Nn$ and outputs a $Q$-dag $\Tt_1(\infA)$ such that $\Bb$ accepts $\infA$ if and only if $\Tt_1(\infA)$ contains an accepting path.
\end{lemma}

\begin{proof}
The transducer $\Tt_1$, after reading a finite word $w\in\Sigma^\ast$, stores in its state the set of states of $\Bb$ reachable from $q_\init^\Bb$ over $w$. The initial state of $\Tt_1$ is $\{q_\init\}$. Given a state $R\subseteq Q$ of $\Tt_1$ and a letter $a$, the transducer moves to the state
\[R'=\{q'\mid (q,a,q')\in\delta^\Bb, q\in R\}\]
and outputs a letter $M\in [Q]$ such that $M(q,q')=M_a(q,q')$ if $q\in R$ and $M(q,q')=\trNon$ if $q\notin R$ (see Section~\ref{sec:intro_automata} for the definition of $M_a$ and $[Q]$). Clearly there is a computable bijection between the accepting runs of $\Bb$ over $\infA$ and accepting paths in the $Q$-dag $\Tt_1(\infA)$.
\end{proof}

\subsection{Reduction to tree-shaped \texorpdfstring{$Q$}{Q}-dags}
\label{ssec:reduction-to-trees}

The next lemma shows that one can use a transducer to reduce general $Q$-dags to tree-shaped $Q$-dags
.

\begin{lemma}
\label{lem:henryk}
There exists a transducer $\Tt_2$ that takes as input a $Q$-dag $\infA'$ and outputs a tree-shaped $Q$-dag $\Tt_2(\infA')$ such that $\infA'$ contains an accepting path if and only if $\Tt_2(\infA')$ contains an accepting path.
\end{lemma}

To prove this lemma we will use a lexicographic order on paths in a given $Q$-dag. A crucial ingredient here is Bounded-width K\"onig's Lemma from Section~\ref{sec:ind-to-konig}. Additionally, we need to make sure that the graph to which Bounded-width K\"onig's Lemma is applied can be obtained using $\Delta^0_1$-comprehension. For this purpose we use $\Sigma^0_2\ind$ once again.

In the proof we will use the following definition.

\begin{definition}[Profiles]
For a finite path $w$ in a $Q$-dag, define its profile to be the word over the alphabet $\{\trYes,\trAcc\}\times Q^2$ which is obtained by replacing each edge $((q,k),(q',k+1))$ in $w$ by $(x,q,q')$ where $x\in\{\trYes,\trAcc\}$ is the type of the edge ($\trAcc$ for accepting and $\trYes$ for non-accepting). Let us fix any linear order ${\preceq}$ on $\{\trYes,\trAcc\}\times Q^2$ such that $(\trAcc,q,q')\prec(\trYes,p,p')$. Let ${\preceq}$ be the lexicographic order on paths induced by the order ${\preceq}$ on their profiles. We call a path $w$ \emph{optimal} if it is lexicographically minimal among all paths with the same source and target.
\end{definition}

Lemma~\ref{lem:henryk} follows from Claims~\ref{claim_one} and~\ref{claim_two}.

\begin{claim}
\label{claim_one}
There is a transducer $\fun{\Tt}{[Q]^\Nn}{[Q]^\Nn}$ such that if the input is $\infA$ then $\Tt(\infA)$ is tree-shaped with the same reachable vertices as in $\infA$, and such that every finite path from the root in $\Tt(\infA)$ is an optimal path in $\infA$.
\end{claim}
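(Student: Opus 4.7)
The plan is to construct a finite-state transducer that, at each position $n$, keeps in its state two pieces of information about the prefix of $\infA$ read so far: the set $R\subseteq Q$ of states $q$ for which $(q,n)$ is reachable from $(q_\init,0)$ in $\infA$, together with a linear order ${\leq_R}$ on $R$ reflecting the order ${\preceq}$ on the optimal paths from $(q_\init,0)$ to the various $(q,n)$. Since $Q$ is finite, there are only finitely many such pairs $(R,{\leq_R})$, so the state space is finite; the initial state is $(\{q_\init\},{\leq_R})$ with the trivial order on the singleton.

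Given a letter $M\in[Q]$ read in state $(R,{\leq_R})$, the transition is computed as follows. Let $R'=\{q'\mid \exists q\!\in\!R.\ M(q,q')\neq\trNon\}$ be the new reachable set, and for each $q'\in R'$ let its \emph{best predecessor} $\pi(q')$ be the $\leq_R$-minimum $q\in R$ such that $M(q,q')\neq\trNon$. The output letter $M'\in[Q]$ is defined by $M'(q,q')=M(q,q')$ when $q'\in R'$ and $q=\pi(q')$, and $M'(q,q')=\trNon$ otherwise; this retains exactly one incoming edge per reachable vertex at the next level, so $\Tt(\infA)$ is tree-shaped and its reachable set at level $n+1$ is exactly $R'$. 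The new order ${\leq_{R'}}$ on $R'$ is defined by $q_1'\leq_{R'} q_2'$ iff either $\pi(q_1')<_R\pi(q_2')$, or $\pi(q_1')=\pi(q_2')$ and the last edges satisfy $(M(\pi(q_1'),q_1'),\pi(q_1'),q_1')\preceq(M(\pi(q_2'),q_2'),\pi(q_2'),q_2')$ in the fixed order on edge labels.

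Correctness follows from a straightforward induction on $n$ (a $\Sigma^0_1$ statement, hence available in $\rca$). The invariant maintained is that $R$ is the reachable set of $\infA$ at level $n$, that $\leq_R$ agrees with ${\preceq}$ on optimal paths to vertices in $R$, and that the unique path in $\Tt(\infA)$ from $(q_\init,0)$ to each $(q,n)\in R$ is the optimal path in $\infA$. The inductive step rests on one elementary observation: the optimal paths to distinct vertices at a given level have distinct profiles (as they end at different targets), so the lex comparison of two extensions of the form $p(q,n)\cdot e$ and $p(q',n)\cdot e'$ is decided by comparing $p(q,n)$ and $p(q',n)$ whenever $q\neq q'$, and reduces to comparing the last edges when $q=q'$. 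This justifies both the choice of best predecessor as the $\leq_R$-minimum and the recipe for $\leq_{R'}$. The only real obstacle is identifying this finite invariant $(R,{\leq_R})$ as the right one to maintain; once that is done, the construction and its verification go through in $\rca$ without any use of the stronger induction principles available in the surrounding section.
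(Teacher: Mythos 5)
Your construction is correct and is essentially the paper's own proof: the paper likewise defines $\Tt$ by keeping in its state the lexicographic ordering on the optimal paths to the reachable vertices at the current depth, and relies on the same prefix-compatibility observation ($ww'\preceq uu'$ iff $w\prec u$, or $w=u$ and $w'\preceq u'$) to justify choosing each vertex's incoming edge from the $\preceq$-least admissible predecessor. You merely spell out the transition function and the $\Sigma^0_1$ invariant in more detail than the paper's sketch does.
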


\begin{proof}
We start with the following observation about the order ${\preceq}$. Let $w,w',u,u'$ be paths in a $Q$-dag $\infA$ such that the target of $w$ (resp.~$u$) is the source of $w'$ (resp.~$u'$); and $w$, $u$ are of equal length. Then $ww'\preceq uu'$ if and only if $w\prec u$ or $w=u$ and $w'\preceq u'$.

Now let us define $\Tt(\infA)$ by choosing, for every vertex reachable in $\infA$, an ingoing edge that belongs to some optimal path. Putting all of these edges together will yield a tree-shaped $Q$-dag as in the statement of the claim. To produce such edges, after reading the first $k$ letters, the automaton keeps in its state a linear order on $Q$ that corresponds to the lexicographic ordering on the optimal paths leading from the root to the nodes at depth $k$. 
Updating the order on $Q$ upon reading a~new letter from $[Q]$ is possible thanks to the observation above---thus, only finitely many states that keep the current order on $Q$ are enough.
\end{proof}

Notice that the above proof is purely constructive and the statement of Claim~\ref{claim_one} involves only finite combinatorics, therefore it can be performed in $\rca$.

\begin{claim}
\label{claim_two}
Let $\Tt$ be the transducer from Claim~\ref{claim_one}. 
If the input $\infA$ to $\Tt$ contains an accepting path then so does the output $\Tt(\infA)$.
\end{claim}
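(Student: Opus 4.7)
The plan is to apply Bounded-width K\"onig's Lemma (Theorem~\ref{thm:ind-to-konig}) to an auxiliary bounded-width graph whose infinite paths correspond to accepting paths in $T := \Tt(\infA)$. Fix an accepting path $\rho$ in $\infA$. For every $n$, the vertex $\rho(n)$ is reachable in $T$, so there is a unique $T$-path $\sigma_n$ from $(q_\init, 0)$ to $\rho(n)$, which by Claim~\ref{claim_one} coincides with the $\preceq$-optimal path in $\infA$ to $\rho(n)$. In particular, $T$ contains finite paths from the root of every length.

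The crucial construction is an auxiliary graph $G$ with vertices $(v, k) \in V(T) \times \{0, 1, \ldots, |Q|\}$, where $k$ counts the number of consecutive non-accepting edges traversed since the last accepting one: $k$ resets to $0$ on an accepting edge and increments on a non-accepting edge, with $k > |Q|$ forbidden. An infinite path in $G$ then projects to an infinite accepting path in $T$, since the bounded counter ensures infinitely many accepting edges. Applying Bounded-width K\"onig's Lemma to $G$ first requires that $G$ contain arbitrarily long finite paths from the initial vertex $((q_\init, 0), 0)$. This reduces to showing that for cofinally many $n$, the path $\sigma_n$ contains no run of more than $|Q|$ consecutive non-accepting edges: a longer non-accepting run would revisit some state $q \in Q$ at two levels $a < b$ by pigeonhole, and combined with the accepting path $\rho$ (which provides accepting edges elsewhere in $\infA$), this permits replacing the non-accepting loop by an accepting detour to obtain a $\preceq$-smaller path to $\rho(n)$, contradicting the optimality of $\sigma_n$.

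The next step is to ensure that $G$ admits a $\Delta^0_1$-definition so that it can be formed within $\rca$. The set of ``live'' vertices of $G$ (those admitting arbitrarily long continuations) is a priori only $\Pi^0_2$, so a complementary $\Sigma^0_1$ definition is required. This is supplied by adapting the argument of Section~\ref{sec:ind-to-konig}: $\Sigma^0_2\ind$ is used to pin down the maximal number of vertices that can ``die'' infinitely often at each level, and $\Sigma^0_2$-collection then provides a global threshold beyond which the counting stabilises. Once $G$ is available, Bounded-width K\"onig's Lemma produces an infinite path in $G$, whose projection to the first coordinate is the required accepting path in $T$.

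The main obstacle I anticipate is the pigeonhole-and-shortcut argument used to bound the length of non-accepting runs in $\sigma_n$. Since the tiebreaking component of $\preceq$ on $\{\trYes,\trAcc\} \times Q^2$ is unconstrained beyond $(\trAcc,q,q') \prec (\trYes,p,p')$, the replacement of a non-accepting loop by an accepting detour must be shown to yield a \emph{strictly} $\preceq$-smaller path, which requires a careful choice of the detour (or of the tiebreaker) and a precise description of how $\rho$ supplies the needed accepting edges at the right positions of the $Q$-dag. A secondary, milder obstacle is the bookkeeping required to transplant the $\Delta^0_1$-definability argument from Section~\ref{sec:ind-to-konig} to the augmented vertex set $V(T) \times \{0,\ldots,|Q|\}$ of $G$; this is routine but must be carried out with some care to keep the complexity of the defining formulae under control.
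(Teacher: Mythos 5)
Your reduction to Bounded-width K\"onig's Lemma hinges on the claim that for cofinally many $n$ the optimal path $\sigma_n$ to $\rho(n)$ has no run of more than $|Q|$ consecutive non-accepting edges, and this claim is false. Take $Q=\{q\}$ and let $\infA$ have exactly the edges $((q,m),(q,m+1))$, accepting precisely when $m$ is a power of $2$. The unique infinite path is accepting, $\Tt(\infA)=\infA$ is already tree-shaped, and every $\sigma_n$ with $n$ large contains non-accepting runs of length tending to infinity; your counter graph $G$ then has only finite paths of bounded length from $((q_\init,0),0)$, so the hypothesis of Bounded-width K\"onig's Lemma fails and the construction yields nothing. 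The pigeonhole-and-shortcut repair cannot save this: in a levelled dag every path from level $0$ to level $n$ has exactly one edge per level, so a repeated state $q$ at levels $a<b$ inside a non-accepting run is not a removable loop, and the accepting edges of $\rho$ live on $\rho$'s own vertices, so no accepting detour from $(q,a)$ back to $\sigma_n$'s vertex at level $b$ need exist (in the example above there are no choices at all). The underlying obstruction is that ``infinitely many accepting edges'' does not imply ``accepting edges at gaps bounded by $|Q|$''; encoding acceptance into a finite counter turns the B\"uchi condition into a strictly stronger bounded-gap condition, which is exactly what one cannot do here.

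The paper instead applies Bounded-width K\"onig's Lemma to a different graph: the set $t$ of \emph{$\patA$-merging} vertices, i.e.\ those from which some finite path \emph{inside} $\Tt(\infA)$ reaches a vertex of the accepting path $\patA$. (Your $\Sigma^0_2\ind$/collection argument for $\Delta^0_1$-definability is in the right spirit --- the paper runs precisely that counting argument, but for $t$.) Acceptance of the resulting infinite branch $\patA'$ is then proved \emph{a posteriori}: if $\patA'$ had no accepting edge past level $k$, one compares two paths from a vertex $(p,k)$ of $\patA'$ to a common target on $\patA$ --- one going through a merging path $w$ and then along $\patA$ past an accepting edge, the other going along $\patA'$ and then through a later merging path $w'$ --- and the $\preceq$-optimality of the latter (a path of $\Tt(\infA)$) forces an accepting edge onto $\patA'$, a contradiction. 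The accepting behaviour is thus extracted from profile-optimality after K\"onig's Lemma has been applied, rather than being built into the graph beforehand; some mechanism of this kind is unavoidable, and your proposal is missing it.
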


The rest of this subsection is devoted to a proof of Claim~\ref{claim_two}. Let $\infA$ be an input to $\Tt$. Assume that $\patA\in (Q\times \Nn)^\Nn$ is a path that contains infinitely many accepting edges in $\infA$. A node $v$ in the $Q$-dag $\infA$ is said to be \emph{$\patA$\=/merging} if there exists a finite path in $\Tt(\infA)$ that leads from $v$ to a vertex on $\patA$. Our aim is to define the following set of vertices in $Q\times\Nn$:
\[t=\{v\in Q\times\Nn\mid \text{$v$ is $\patA$-merging}\}.\]
The above definition is clearly a $\Sigma^0_1$-definition of $t$.

\begin{subclaim}
There exists a $\Pi^0_1$ predicate over vertices $v$ equivalent to ``$v$ is $\patA$-merging''. As a consequence, $t$ is definable by $\Delta^0_1$-comprehension.
\end{subclaim}

The proof of this subclaim makes essential use of $\Sigma^0_2\ind$ and is similar to the proof of Theorem~\ref{thm:ind-to-konig}.

\begin{proof}
For $i=0,1,\ldots,|Q|$ we will say that $i$ is \emph{$\patA$\=/merging infinitely often} if 
\[\forall j\, \exists k\!>\!j.\ \text{there are at least $i$ $\patA$-merging vertices of the form $(q,k)$ in $\Tt(\infA)$}.\]
The above property of $i$ is clearly a $\Pi^0_2$ property. Let $i_0$ be the maximal $i\leq|Q|$ that is $\patA$\=/merging infinitely often. Such $i_0$ exists by $\Sigma^0_2\ind$. Clearly if $i\leq i'$ and $i'$ is $\patA$\=/merging infinitely often then $i$ is also $\patA$\=/merging infinitely often. By the definition, if $i > i_0$ then there exists $j(i)$ such that for all $k>j(i)$ there are fewer than $i$ $\patA$-merging vertices of the form $(q,k)$ in $\Tt(\infA)$. Notice that we can assume $j_0 := j(i_0 +1)$ to be an upper bound for all $j(i)$ where $i_0<i\leq |Q|$. This means that if $k>j_0$ then there are at most $i_0$ $\patA$-merging vertices of the form $(q,k)$ in $\Tt(\infA)$.

We can now provide a $\Pi^0_1$-definition of $t$ (actually a $\Sigma^0_1$-definition of the vertices outside $t$). A vertex $v=(q,k)$ does not belong to $t$ if $(\star)$: there exists $k'>\max(k,j_0)$ and $i_0$ vertices of the form $v_0=(q_0,k')$, $v_1=(q_1,k')$, \ldots, $v_{i_0}=(q_{i_0},k')$ such that:
\begin{itemize}
\item all the vertices $v_0,\ldots,v_{i_0}$ are $\patA$\=/merging in $\Tt(\infA)$,
\item no path from $v$ to any of $v_i$ for $i=0,1,\ldots,i_0$ exists,
\item there is no path in $\Tt(\infA)$ from $v$ to a vertex of the form $(q',\ell)$ that lies on $\patA$ with $\ell \leq k'$.
\end{itemize}
The latter two conditions are decidable, while the first one is $\Sigma^0_1$. In total, the condition $(\star)$ is $\Sigma^0_1$.

We will now prove that the negation of $(\star)$ in fact defines $t$. First assume that $v=(q,k)\notin t$. Recall that there are infinitely many $k'$ such that there are exactly $i_0$ $\patA$-merging vertices of the form $(q',k')$ in $\Tt(\infA)$. In particular, there exists $k'>\max(k,j_0)$ and $i_0$ vertices of the form $v_0=(q_0,k')$, $v_1=(q_1,k')$, \ldots, $v_{i_0}=(q_{i_0},k')$ such that all of them are $\patA$\=/merging. Since $v$ is not $\patA$-merging, there cannot be a path from $v$ to any of the vertices $v_i$ for $i=1,2,\ldots,i_0$. Similarly, there cannot be a path from $v$ to $\patA$. Therefore, $v$ satisfies $(\star)$.

On the other hand, assume that $v$ has the property $(\star)$ as witnessed by some $k'$ and vertices $v_0,\ldots,v_{i_0}$. Assume to the contrary that $v$ is $\patA$-merging. Let this be witnessed by a path $w$ from $v$ to a vertex $v''=(q'',k'')$ on $\patA$. By the last item of $(\star)$, we must have $k''>k'$. Let $p\in Q$ be the state such that $(p,k')$ lies on the path $w$. Clearly $(p,k')$ is $\patA$-merging so it needs to be one of the vertices $v_1,\ldots,v_{i_0}$. But in that case this vertex can be reached from $v$ by a path in $\Tt(\infA)$, a contradiction.
\end{proof}

We can now apply Bounded-width K\"onig's Lemma (see Definition~\ref{def:bounded-konig}) to the graph with set of vertices $t$ and with edges inherited from $\Tt(\infA)$. This graph has arbitrarily long finite paths starting in $(q_\init,0)$, because each vertex on $\pi$ belongs to $t$ and is reachable
from $(q_\init,0)$ by a path in $\Tt(\infA)$ contained within $t$. We obtain an infinite path $\patA'$ in $\Tt(\infA)$ contained within $t$. Our aim is to prove that $\patA'$ contains infinitely many accepting edges. Assume to the contrary that for some $k\in\Nn$ there is no accepting edge of the form $((p,\ell),(p',\ell+1))$ for $\ell>k$ on $\patA'$. Let $(p,k)$ be a vertex that belongs to $\patA'\cap Q\times\{k\}$. Since $\patA'$ is a path in $t$, we know that $(p,k)$ is $\patA$\=/merging. Let $w$ be a path witnessing this fact and let $(p',k')$ be its final vertex, which lies on $\patA$. Since $\patA$ is accepting, we know that it contains an accepting edge of the form $((r,\ell),(r',\ell+1))$ with $k<\ell$. Let $(q,\ell+1)$ be a vertex that belongs to $\patA'\cap Q\times\{\ell+1\}$. As in the case of $(p,k)$, we have a path $w'$ witnessing that $(q,\ell+1)$ is $\patA$\=/merging, which reaches $\patA$ in a vertex $(q',\ell')$.

This means that in 
$\infA$
there are two paths between $(p,k)$ and $(q',\ell')$ (see Figure~\ref{fig:il-claim-two}): the first one follows $w$ and $\patA$, the second one follows $\patA'$ and $w'$. Notice that the latter path is contained in $t$. This means that the profile of the path through $\patA'$ and $w'$ is smaller than the profile of the path through $w$ and $\patA$. By the definition of the order on profiles, since there is an accepting edge on the respective fragment of $\patA$, the corresponding fragment of the path $\patA'$ needs to contain an accepting edge. This contradicts the assumption that there is no accepting edge of the form $((p,k''),(p',k''+1))$ for $k''>k$ on $\patA'$.

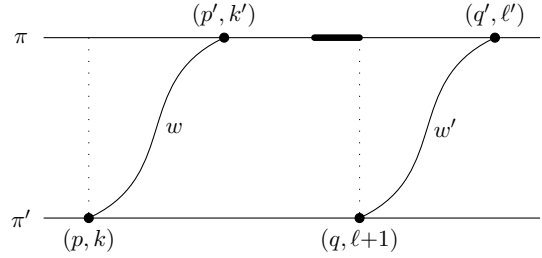
\begin{figure}
\centering
\begin{tikzpicture}[scale=0.6,every node/.style={scale=0.8}]
\draw (0,0) -- (11,0);
\draw (0,-4) -- (11,-4);

\draw[loosely dotted] (1,0) -- (1, -4);

\draw[loosely dotted] (7,0) -- (7, -4);

\node[dot] (lb) at (1,-4) {};
\node at (1,-4.5) {$(p,k)$};

\node[dot] (lu) at (4,-0) {};
\node at (4,+0.5) {$(p',k')$};

\draw (lb) .. controls ++(2,1) and ++(-2,-1) .. (lu);
\node at (2.9,-2) {$w$};

\draw[line width=2.5, cap=round] (6,0) -- (7,0);

\node[dot] (rb) at (7,-4) {};
\node at (7,-4.5) {$(q,\ell{+}1)$};

\node[dot] (ru) at (10,-0) {};
\node at (10,+0.5) {$(q',\ell')$};

\draw (rb) .. controls ++(2,1) and ++(-2,-1) .. (ru);
\node at (8.9,-2) {$w'$};

\node at (-0.5,0)  {$\patA$};
\node at (-0.5,-4) {$\patA'$};
\end{tikzpicture}
\caption{An illustration to the proof of Claim~\ref{claim_two}. The upper horizontal line is the path $\patA$ in $\infA$ that may not be a path in $\Tt(\infA)$. The paths $w$ and $w'$ witness that $(p,k)$ and $(q,\ell{+}1)$ are both $\patA$\=/merging. The boldfaced part of $\patA$ is the chosen accepting edge that appears on $\patA$. Among the two paths from $(p,k)$ to $(q',\ell')$: one through $w$ and the other through $w'$; the latter belongs to $\Tt(\infA)$. Therefore, it has to have smaller profile than the former, in particular it has to contain an accepting edge in between the vertices $(p,k)$ and $(q,\ell+1)$.}
\label{fig:il-claim-two}
\end{figure}

This concludes the proof of Claim~\ref{claim_two} and thus of Lemma~\ref{lem:henryk}.

\subsection{Recognising accepting tree-shaped \texorpdfstring{$Q$}{Q}-dags}
\label{ssec:recognising-tree-shapes}

The proof of Theorem~\ref{thm:ind-to-det} is concluded by the following lemma and an application of Lemma~\ref{lem:transducers}.

\begin{lemma}
\label{lem:michal}
There exists a deterministic Rabin automaton $\Aa$ over the alphabet $[Q]$ that for every tree-shaped $Q$-dag $\infA''\in [Q]^\Nn$ accepts it if and only if $\infA''$ contains an accepting path.
\end{lemma}

We will start by defining the states and transitions of the constructed Rabin automaton. Then we will prove that it in fact verifies if a given infinite word that is a tree-shaped $Q$-dag contains an accepting path.


In general, the size of the constructed Rabin automaton is one of the crucial parameters of the construction, as it influences the running time of the algorithms for verification and synthesis of reactive systems. However, in this work we are mainly focused on the fact that an equivalent deterministic automaton exists. Therefore, the relatively simple 
construction presented here will be far from optimal. For a discussion on optimality of the constructions involved, see~\cite{colcombet_safra}. We conjecture that soundness of more optimal determinisation procedures, such as \emph{Safra's construction}~\cite{safra_determinisation}, may be proven in $\Sigma^0_2\ind$. 

\begin{definition}
Fix a finite nonempty set $Q$. We will say that $\tau$ is a \emph{$Q$-scheme} if $\tau$ is a finite tree with:
\begin{itemize}
\item internal nodes labelled by $Q$,
\item leaves uniquely labelled by $Q$,
\item edges uniquely labelled by $\{0,1,\ldots,2\cdot|Q|\}$, these labels are called \emph{identifiers},
\item each edge additionally marked as either ``accepting'' or ``non-accepting''.
\end{itemize}
Additionally, the root cannot be a~leaf and every node of $\tau$ that is neither the root nor a~leaf has to have at least two children.
\end{definition}

Notice that we are not requiring a~$Q$-scheme to be balanced as a~tree. It is easy to see that since the leaves of $\tau$ are uniquely labelled by $Q$, $\tau$ has at most $2\cdot|Q|$ nodes. Therefore, the requirement that the edge labels from $\{0,\ldots,2\cdot|Q|\}$ need to be pairwise distinct is not restrictive. Clearly the number of $Q$-schemes is finite (in fact exponential in $|Q|$). Let the set of states of $\Aa$ be the set of all $Q$-schemes. Let the initial state of $\Aa$ be the $Q$-scheme consisting of two nodes: the root and its only child, both labelled by $q_\init$. Let the edge between the root and the unique leaf be labelled by the identifier $0$ and be ``non-accepting''.

We will now proceed to the definition of the transitions of $\Aa$. Assume that the automaton is in a state $\tau$ and reads a tree-shaped letter $M\in[Q]$, see Figure~\ref{fig:scheme-transition}.

\begin{figure}
\centering
\begin{tikzpicture}[scale=0.55]    
  \path[acc] (0*\dxs, -0) edge node[above, idf] {$2$} ++(\dxs, +1.5);
  \path[tra] (0*\dxs, -0) edge node[below, idf] {$3$} ++(\dxs, -1.0);
  
  \path[tra] (1*\dxs, +1.5) edge node[above, idf] {$9$} ++(\dxs, +0.5);
  \path[acc] (1*\dxs, +1.5) edge node[below, idf] {$6$} ++(\dxs, -0.5);
  \path[tra] (1*\dxs, -1.0) edge node[above, idf] {$5$} ++(\dxs, +1.0);
  \path[tra] (1*\dxs, -1.0) edge node[below, idf] {$7$} ++(\dxs, +0.0);
  \path[acc] (1*\dxs, -1.0) edge node[below, idf] {$1$} ++(\dxs, -1.0);
    
  \node[dot] at (0*\dxs, 0) {};
  \node[dot] at (1*\dxs, +1.5) {};
  \node[dot] at (1*\dxs, -1.0) {};
  
  \foreach \y in {-2,...,2} {
    \node[dot] at (2*\dxs, \y) {};
  }
    
  \foreach \x in {0,...,1} {
    \foreach \y in {-3,...,3} {
      \node[dot] at (3*\dxs+\x*\dxs, \y) {};
    }
  }

  \path[tra] (3*\dxs, +3.0) -- ++(\dxs, -0.0);
  \path[acc] (3*\dxs, +3.0) -- ++(\dxs, -1.0);
    
  \path[tra] (3*\dxs, +1.0) -- ++(\dxs, -0.0);
  \path[acc] (3*\dxs, +1.0) -- ++(\dxs, -1.0);
  
  \path[acc] (3*\dxs, -1.0) -- ++(\dxs, -0.0);
  \path[acc] (3*\dxs, -2.0) -- ++(\dxs, -0.0);
  \path[tra] (3*\dxs, -2.0) -- ++(\dxs, -1.0);
  
\end{tikzpicture}
\caption{A $Q$-scheme $\tau$ (a state of $\Aa$) and a tree-shaped letter $M\in[Q]$ encountered by $\Aa$. The ``non-accepting'' edges in $\tau$ are dashed. The leaves of $\tau$ are arranged according to some fixed order on $Q$ in such a way as to match the layout of $M\in[Q]$. To simplify the picture we do not include the states in $Q$ labeling the nodes of $\tau$, using dots instead.}
\label{fig:scheme-transition}
\end{figure}
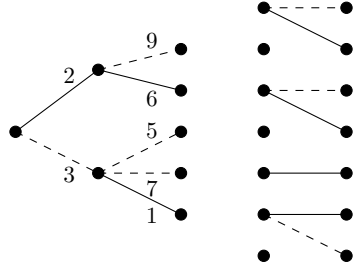

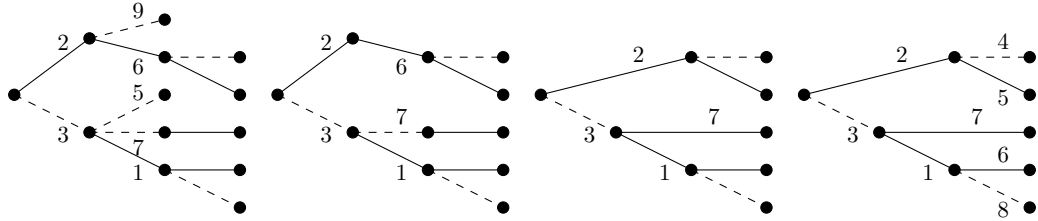
\begin{figure}
\centering
\begin{tikzpicture}[scale=0.5]    

  \coordinate (sero) at (0,0);
  
  \path[acc] ($(sero)+(0*\dxs, -0)$) edge node[above, idf] {$2$} ++(\dxs, +1.5);
  \path[tra] ($(sero)+(0*\dxs, -0)$) edge node[below, idf] {$3$} ++(\dxs, -1.0);
  
  \path[tra] ($(sero)+(1*\dxs, +1.5)$) edge node[above, idf] {$9$} ++(\dxs, +0.5);
  \path[acc] ($(sero)+(1*\dxs, +1.5)$) edge node[below, idf] {$6$} ++(\dxs, -0.5);
  \path[tra] ($(sero)+(1*\dxs, -1.0)$) edge node[above, idf] {$5$} ++(\dxs, +1.0);
  \path[tra] ($(sero)+(1*\dxs, -1.0)$) edge node[below, idf] {$7$} ++(\dxs, +0.0);
  \path[acc] ($(sero)+(1*\dxs, -1.0)$) edge node[below, idf] {$1$} ++(\dxs, -1.0);
       
  \node[dot] at ($(sero)+(0*\dxs, 0)$) {};
  \node[dot] at ($(sero)+(1*\dxs, +1.5)$) {};
  \node[dot] at ($(sero)+(1*\dxs, -1.0)$) {};
  
  \node[dot] at ($(sero)+(2*\dxs, 2)$) {};
  \node[dot] at ($(sero)+(2*\dxs, 1)$) {};
  \node[dot] at ($(sero)+(2*\dxs, 0)$) {};
  \node[dot] at ($(sero)+(2*\dxs, -1)$) {};
  \node[dot] at ($(sero)+(2*\dxs, -2)$) {};
  
  \foreach \y in {-3,...,1} {
    \node[dot] at ($(sero)+(3*\dxs, \y)$) {};
  }  
    
  \path[tra] ($(sero)+(2*\dxs, +1.0)$) -- ++(\dxs, -0.0);
  \path[acc] ($(sero)+(2*\dxs, +1.0)$) -- ++(\dxs, -1.0);
  
  \path[acc] ($(sero)+(2*\dxs, -1.0)$) -- ++(\dxs, -0.0);
  \path[acc] ($(sero)+(2*\dxs, -2.0)$) -- ++(\dxs, -0.0);
  \path[tra] ($(sero)+(2*\dxs, -2.0)$) -- ++(\dxs, -1.0);   

  \coordinate (sero) at (7,0);
  
  \path[acc] ($(sero)+(0*\dxs, -0)$) edge node[above, idf] {$2$} ++(\dxs, +1.5);
  \path[tra] ($(sero)+(0*\dxs, -0)$) edge node[below, idf] {$3$} ++(\dxs, -1.0);
  
  \path[acc] ($(sero)+(1*\dxs, +1.5)$) edge node[below, idf] {$6$} ++(\dxs, -0.5);
  \path[tra] ($(sero)+(1*\dxs, -1.0)$) edge node[above, idf] {$7$} ++(\dxs, +0.0);
  \path[acc] ($(sero)+(1*\dxs, -1.0)$) edge node[below, idf] {$1$} ++(\dxs, -1.0);
              
  \node[dot] at ($(sero)+(0*\dxs, 0)$) {};
  \node[dot] at ($(sero)+(1*\dxs, +1.5)$) {};
  \node[dot] at ($(sero)+(1*\dxs, -1.0)$) {};
  
  \node[dot] at ($(sero)+(2*\dxs, 1)$) {};
  \node[dot] at ($(sero)+(2*\dxs, -1)$) {};
  \node[dot] at ($(sero)+(2*\dxs, -2)$) {};
  
  \foreach \y in {-3,...,1} {
    \node[dot] at ($(sero)+(3*\dxs, \y)$) {};
  }  
    
  \path[tra] ($(sero)+(2*\dxs, +1.0)$) -- ++(\dxs, -0.0);
  \path[acc] ($(sero)+(2*\dxs, +1.0)$) -- ++(\dxs, -1.0);
  
  \path[acc] ($(sero)+(2*\dxs, -1.0)$) -- ++(\dxs, -0.0);
  \path[acc] ($(sero)+(2*\dxs, -2.0)$) -- ++(\dxs, -0.0);
  \path[tra] ($(sero)+(2*\dxs, -2.0)$) -- ++(\dxs, -1.0);  

  \coordinate (sero) at (14,0);
  
  \path[acc] ($(sero)+(0*\dxs, -0)$) edge node[above, idf] {$2$} ++(2*\dxs, +1.0);
  \path[tra] ($(sero)+(0*\dxs, -0)$) edge node[below, idf] {$3$} ++(\dxs, -1.0);
  
  \path[acc] ($(sero)+(1*\dxs, -1.0)$) edge node[above, idf] {$7$} ++(2*\dxs, +0.0);
  \path[acc] ($(sero)+(1*\dxs, -1.0)$) edge node[below, idf] {$1$} ++(\dxs, -1.0);
              
  \node[dot] at ($(sero)+(0*\dxs, 0)$) {};
  \node[dot] at ($(sero)+(1*\dxs, -1.0)$) {};
  
  \node[dot] at ($(sero)+(2*\dxs, 1)$) {};
  \node[dot] at ($(sero)+(2*\dxs, -2)$) {};
  
  \foreach \y in {-3,...,1} {
    \node[dot] at ($(sero)+(3*\dxs, \y)$) {};
  }  
    
  \path[tra] ($(sero)+(2*\dxs, +1.0)$) -- ++(\dxs, -0.0);
  \path[acc] ($(sero)+(2*\dxs, +1.0)$) -- ++(\dxs, -1.0);
  
  \path[acc] ($(sero)+(2*\dxs, -2.0)$) -- ++(\dxs, -0.0);
  \path[tra] ($(sero)+(2*\dxs, -2.0)$) -- ++(\dxs, -1.0);  

  \coordinate (sero) at (21,0);
  
  \path[acc] ($(sero)+(0*\dxs, -0)$) edge node[above, idf] {$2$} ++(2*\dxs, +1.0);
  \path[tra] ($(sero)+(0*\dxs, -0)$) edge node[below, idf] {$3$} ++(\dxs, -1.0);
  
  \path[acc] ($(sero)+(1*\dxs, -1.0)$) edge node[above, idf] {$7$} ++(2*\dxs, +0.0);
  \path[acc] ($(sero)+(1*\dxs, -1.0)$) edge node[below, idf] {$1$} ++(\dxs, -1.0);
       
  \node[dot] at ($(sero)+(0*\dxs, 0)$) {};
  \node[dot] at ($(sero)+(1*\dxs, -1.0)$) {};
  
  \node[dot] at ($(sero)+(2*\dxs, 1)$) {};
  \node[dot] at ($(sero)+(2*\dxs, -2)$) {};
  
  \foreach \y in {-3,...,1} {
    \node[dot] at ($(sero)+(3*\dxs, \y)$) {};
  }  
    
  \path[tra] ($(sero)+(2*\dxs, +1.0)$) edge node[above, idf] {$4$} ++(\dxs, -0.0);
  \path[acc] ($(sero)+(2*\dxs, +1.0)$) edge node[below, idf] {$5$} ++(\dxs, -1.0);
  
  \path[acc] ($(sero)+(2*\dxs, -2.0)$) edge node[above, idf] {$6$} ++(\dxs, -0.0);
  \path[tra] ($(sero)+(2*\dxs, -2.0)$) edge node[below, idf] {$8$} ++(\dxs, -1.0);  
\end{tikzpicture}
\caption{The successive transformations of the scheme $\tau$ when performing steps $1$ to $4$ of a transition of $\Aa$.}
\label{fig:scheme-transformation}
\end{figure}

The resulting state $\tau'$ is constructed by performing the following four steps depicted on Figure~\ref{fig:scheme-transformation}.

\begin{description}
\item[Step 1.] We append the new letter $M$ to the $Q$-scheme $\tau$ obtaining a new tree. The identifiers on the newly created edges are undefined and some nodes may have exactly one child. However, all the nodes are labelled by states in $Q$, either coming from $\tau$ or from $M$.
\item[Step 2.] We eliminate paths that die out before reaching the target states of $M$. In the running example, this means eliminating edges with identifiers $9$ and $5$.
\item[Step 3.] We eliminate unary nodes, thus joining several edges into a single edge. This means that a path which only passes through nodes of degree one gets collapsed into a single edge, the identifier for such an edge is inherited from the first (i.e.~leftmost) edge on the path. The newly created edge is ``accepting'' if and only if any of the collapsed edges were ``accepting''. In the running example, this means eliminating the unary nodes that are the targets of edges with identifiers $2$ and $7$.
\item[Step 4.] Finally, if there are edges that do not have identifiers, these edges get assigned arbitrary identifiers that are not currently used. In the running example we add identifiers $4$, $5$, $6$, and $8$.
\end{description}

This completes the definition of the state update function. We now define the acceptance condition. 

\subsection*{The acceptance condition.} When executing a transition, the automaton described above goes from one $Q$-scheme to another $Q$-scheme. For each identifier, a transition can have three possible effects, described below:

\begin{description}
\item[Delete] An edge can be deleted in Step~2 (it dies out) or in Step~3 (it is merged with a path to the left). The identifier of such an edge is said to be \emph{deleted} in the transition. The deleted identifiers in the running example are $9$, $5$, and $6$. Since we reuse identifiers, an identifier can still be present after a transition that deletes it, because it has been added again in Step~4. This happens to identifiers $5$ and $6$ in the running example.
\item[Refresh] In Step~3, an entire path with edges identified by $e_1,e_2, \cdots, e_k$ is folded into its first edge identified by $e_1$. If any of the edges identified by $e_2, \cdots, e_n$ was ``accepting'' then we say that the identifier $e_1$ is \emph{refreshed}. In the running example the refreshed identifiers are $2$ and $7$ (the edge identified by $2$ was already ``accepting'' while the edge identified by $7$ become ``accepting'' because of the merging).
\item[Nothing] An identifier might be neither \emph{deleted} nor \emph{refreshed}. In the running example, this is the case for identifiers $1$ and $3$.
\end{description}

The following lemma describes the key property of the above data structure.

\begin{lemma}
\label{lem:acceptance}
For every tree-shaped $Q$-dag $\infA\in[Q]^\Nn$, the following are equivalent:
\begin{enumerate}
\item $\infA$ contains an accepting path,
\item some identifier is deleted only finitely often but refreshed infinitely often.
\end{enumerate}
\end{lemma}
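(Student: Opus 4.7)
The plan is to prove the two directions separately, using the bounded-width structure of the tree-shaped $Q$-dag and the update rules of Steps~1--4.

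For $(2) \Rightarrow (1)$: Given an identifier $e$ deleted only finitely often but refreshed infinitely often, I pick $T_0$ after which $e$ is never deleted. The scheme edge labelled $e$ is then present in every $\tau_T$ with $T \geq T_0$; its source $u_e$ is a fixed node in the $Q$-dag---any change of source would demand a collapse in which $e$ is not the top edge, which would delete $e$---while its target advances precisely when $e$ is the topmost edge of a Step~3 collapse, that is, precisely on a refresh. Using $\Delta^0_1$-comprehension I assemble the successively absorbed dag segments into an infinite path in the $Q$-dag starting at $u_e$, carrying infinitely many accepting dag edges: each refresh absorbs at least one accepting contribution, possibly arriving via an intermediate scheme edge whose accepting status had propagated up from below. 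Tree-shapedness supplies the unique finite path from $(q_\init, 0)$ to $u_e$, and concatenating the two produces an accepting path of $\infA$.

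For $(1) \Rightarrow (2)$: Fix an accepting path $\patA$. Call $\patA(t)$ a \emph{permanent branching} if it has at least two dag children with infinite subtrees. I first show, using $\Sigma^0_1$-collection and the width bound $|Q|$, that the $Q$-dag has at most $|Q|$ infinite branches: $|Q|+1$ distinct branches would yield a uniform depth beyond which they are pairwise distinct, contradicting the width bound. Each permanent branching along $\patA$ contributes a distinct infinite branch disjoint from $\patA$ via its non-$\patA$ infinite-subtree child, so there are at most $|Q|-1$ of them. Since ``permanent branching'' is $\Pi^0_2$, I apply $\Sigma^0_2\ind$ to produce $s$, the maximum depth of a permanent branching along $\patA$, or $s = 0$ (treated as the scheme's root) if none exists. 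Let $a$ be the identifier of the scheme edge leaving $\patA(s)$ along $\patA$'s direction, from the finite time when that edge first appears. The identifier $a$ is never deleted from that point on: $\patA(s)$ is a permanent branching (or the scheme's root, which never collapses) and hence never unary in any $\tau_T$, so $a$ is not absorbed upward in Step~3; and $a$'s target lies on $\patA$ and so never dies in Step~2. For infinite refresh, every scheme edge strictly below $\patA(s)$ on the path to $\ell_T$ sits at a non-permanent branching (or is the current terminal edge to $\ell_T$), so it eventually disappears in a Step~3 collapse that propagates any accepting status upward; combined with the infinitude of accepting dag edges past depth $s$ on $\patA$, this shows that each such edge eventually feeds a refresh of $a$ through a finite chain of absorptions.

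The hard part will be formalising the final propagation step within $\rca + \Sigma^0_2\ind$. I expect to use $\Sigma^0_2$-collection to control the lifespans of the at most $|Q|-1$ non-permanent branchings sitting at any moment between $\patA(s)$ and the frontier, arguing uniformly that accepting status originating from an accepting dag edge reaches $a$ in finite time. Identifying $s$ itself is a $\Sigma^0_2\ind$ application (the largest element of a $\Pi^0_2$ set of bounded size), so the whole argument should fit inside $\rca + \Sigma^0_2\ind$.
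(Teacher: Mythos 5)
Your proposal is correct, and the $(2)\Rightarrow(1)$ direction is essentially the paper's own argument: freeze the identifier's source after the last deletion, let each refresh strictly extend the associated dag-path past a fresh accepting edge, read off the infinite path by $\Delta^0_1$-comprehension, and prepend the unique root path supplied by tree-shapedness. The $(1)\Rightarrow(2)$ direction, however, takes a genuinely different route. The paper stays entirely inside the scheme data structure: for each time $n$ it considers the sequence of identifiers $e^{(n)}_0,\ldots,e^{(n)}_{k(n)}$ along the root-to-leaf path tracing $\patA$, calls a position $k$ \emph{stable} if from some point on the path always reaches depth $k$ and no identifier at depth $\le k$ is deleted, takes the deepest stable position $k_0\le|Q|$ by $\Sigma^0_2\ind$ (the bound on $k$ comes for free from the scheme's depth), and shows the eventually-constant identifier at depth $k_0$ is refreshed infinitely often by a $\Sigma^0_1\ind$ propagation argument run between an accepting edge of $\patA$ and the next deletion at depth $k_0+1$. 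You instead reason geometrically in the dag via the last \emph{permanent branching} on $\patA$; this selects the same identifier (a node of $\patA$ is a permanent branching exactly when the corresponding scheme position is stable), but costs you an extra counting lemma (at most $|Q|-1$ permanent branchings, so that $\Sigma^0_2\ind$ can be applied to ``there exist $i$ of them'') that the paper avoids. What you gain is a cleaner semantic picture and a lighter propagation step: since every branching on $\patA$ below a fixed level and past $s$ is non-permanent, $\Sigma^0_1$-collection (``infinite subtree'' is $\Pi^0_1$, its negation $\Sigma^0_1$) already yields a time after which the chosen edge's dag-path has swallowed any prescribed prefix of $\patA$, and each swallowed accepting dag-edge forces a refresh via the marking invariant (scheme edge marked accepting iff its dag-path contains an accepting edge, proved by $\Sigma^0_1\ind$). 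So the step you flagged as hard does close at exactly the level you predicted, and there is no gap.
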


Before proving the above lemma, we show how it completes the proof of Lemma~\ref{lem:michal}. Clearly, the second condition above can be expressed as a Rabin condition on transitions of $\Aa$---the Rabin pairs $(E_i,F_i)$ range over the set of identifiers $i=1,\ldots,2\cdot|Q|$, a transition is in $E_i$ if an edge with the identifier $i$ is deleted and is in $F_i$ if the edge is refreshed.

\begin{proofof}{Lemma~\ref{lem:acceptance}}
First assume that $\infA$ contains an accepting path $\patA$. Let $\rho$ be the sequence of states of $\Aa$ when reading $\infA$. Notice that for every $k$, the path $\patA$ induces a path in the $Q$-scheme $\rho(k)$ that connects the root with a leaf labelled by a state $q(k)$ such that $\patA(k)=(q(k),k)$. Let $e^{(k)}_0,\ldots,e^{(k)}_{j(k)}$ be the identifiers of the edges on this path. Notice that $j(k)\leq |Q|$ because each internal node of a $Q$-scheme has at least two children and leaves of $Q$-schemes are uniquely labelled by the states in $Q$. We will say that a position $j=0,1,\ldots,|Q|$ is \emph{unstable} if for infinitely many $k$ either $j(k)<j$ or some identifier $e^{(k)}_{j'}$ for $j'\leq j$ is \emph{deleted} in the $k$-th transition in $\rho$. Notice that $0$ is \emph{stable} because we never delete the first edge of a $Q$-scheme. Let $j_0$ be the greatest \emph{stable} number; such a number exists by $\Sigma^0_2\ind$.

By $\Sigma^0_2$-collection we can find a number $k_0$ such that for $k\geq k_0$ we have $j(k)\geq j_0$ and no identifier $e^{(k)}_{j'}$ with $j'\leq j_0$ is \emph{deleted} in the $k$-th transition in $\rho$. Therefore, for every $j'\leq j_0$ and $k\geq k_0$ we have
\[e^{(k)}_{j'}=e^{(k_0)}_{j'}.\]

Let $i=e^{(k)}_{j_0}$. Clearly by the definition of $j_0$ we know that the identifier $i$ is not deleted for $k\geq k_0$. It remains to prove that $i$ is refreshed infinitely many times. Assume to the contrary that for some $k_1\geq k_0$ and every $k\geq k_1$ the identifier $i$ is never refreshed in the $k$-th transition in $\rho$. First notice that $\patA$ contains an accepting edge of the form $((q,k_2-1),(q',k_2))$ for some $k_2 \ge k_1$. The edge identified by $e^{(k_2)}_{j(k_2)}$ is accepting in $\rho(k_2)$---this is the last edge on the path corresponding to $\patA$ in the $Q$-scheme obtained after reading the $k_2$-th letter of $\infA$. There are two cases. If $j(k_2) = j_0$, then $i$ is refreshed in the $k_2$-th transition, contradicting our assumption that $i$ is not refreshed beyond $k_1$. Otherwise, $j(k_2) \ge j_0 + 1$ and, by the definition of $j_0$ we know that for some $k_3\geq k_2$ the identifier $e^{(k_3)}_{j_0+1}$ is deleted in the $k_3$-th transition in $\rho$. Notice that since $\patA$ is an infinite path, this identifier cannot be deleted in Step~2 as it never dies out.
Therefore, it must be the case that $e^{(k_3)}_{j_0+1}$ is deleted in Step~3 and that $j(k_3+1)=j_0$. Let us prove by $\Sigma^0_1\ind$ on $k=k_2,k_2+1,\ldots,k_3$ that either:
\begin{itemize}
\item the identifier $i$ is refreshed in the $k'$-th transition of $\rho$ for some $k'$ such that $k_2\leq k'\leq k$, or
\item there exists an accepting edge in the $Q$-scheme $\rho(k)$ that is identified by $e^{(k)}_{j'}$ for some $j'$ such that $j_0<j'\leq j(k)$.
\end{itemize}

For $k=k_2$ the second possibility holds. The inductive step follows directly from the definition of the transitions of $\Aa$---an accepting edge propagates to the left, firing successive refreshes for the merged identifiers. For $k=k_3$ we know that there is no $j'$ such that $j_0<j'\leq j(n)$ thus the first possibility needs to hold. This contradicts our assumption that there was no refresh on $i$ after the $k_1$\=/th letter of $\infA$ was read. This concludes the proof of the first implication in Lemma~\ref{lem:acceptance}.

Now assume that $\infA$ is a tree-shaped $Q$-dag accepted by the automaton $\Aa$. Let us fix the run $\rho$ of $\Aa$ over $\infA$ and assume that $i_0$ is an identifier that is deleted only finitely many times but refreshed infinitely many times. Let $k_0$ be such that the identifier $i_0$ is never deleted after the $k_0$-th transition of $\Aa$. Our aim is to prove that the $Q$-dag $\infA$ contains an accepting path. 

We start by noticing that for every $k\geq 0$ and an edge identified by $e$ in the $Q$-scheme $\rho(k)$, this edge corresponds to a finite path $w_{k,e}$ in the $Q$-dag $\infA$. For the newly created edges that are assigned new identifiers in Step~4, the corresponding path is an edge $(q,k),(q',k')$ from the letter $M$. For edges that were assigned an identifier earlier, the path is defined inductively, by merging the paths whenever we merge edges in Step~3. Using $\Sigma^0_1\ind$ we easily prove that a corresponding edge is marked ``accepting'' if and only if the path contains an accepting edge in $\infA$. If an identifier $i$ is refreshed then the path gets longer and contains at least one new accepting transition.

In this way, we can track the path corresponding to the edges identified by $i_0$ for $k\geq k_0$. Since the identifier $i_0$ is refreshed infinitely many times, the path corresponding to it is prolonged infinitely many times. Notice that the source of the paths corresponding to $i_0$ is fixed and of the form $(q(k_0),k_0)$---the identifier $i_0$ is never merged to the left. Clearly, to every $k\geq k_0$ we can effectively assign a state $q(k)$ such that for some $k'>k_0$ the path $w_{k',i_0}$ passes through $(q(k),k)$---such $k'$ exists because $i$ is refreshed infinitely many times. This gives us a $\Delta^0_1$-definition of an infinite path $\patA$ that starts in $(q(k_0),k_0)$. We can append it to a path from $(q_\init, 0)$ to $(q(k_0),k_0)$ 
and obtain a path $\patA'$ starting in $(q_\init, 0)$. Notice that each refresh of $i_0$ corresponds to a new accepting edge on $\patA$, 
which means that $\patA'$ is accepting.
\end{proofof}


\section{Conclusions and further work}
\label{sec:conclusion}

In this work we have characterised the logical strength of B\"uchi's decidability theorem and related results over the theory $\rca$. We proved over $\rca$ that complementation for B\"uchi automata is equivalent to $\Sigma^0_2\ind$, as is the decidability of $\mso(\Nn,{\le})$ (to the extent that this can be expressed). 

Many concepts related to automata on infinite words are $\Sigma^0_2$ or $\Pi^0_2$ and thus potentially problematic without $\Sigma^0_2\ind$---which is needed, for instance, to have access to the set of states occurring infinitely often in a run or to make sense of some more sophisticated acceptance conditions alternative to B\"uchi's. The picture suggested by our work is that, on the one hand, $\Sigma^0_2\ind$ is indeed necessary for the theory of infinite word automata to behave reasonably, but on the other hand, this minimal reasonability condition already suffices to prove all the basic results. 
This situation is completely different for automata on infinite trees, where the concepts also make sense already in $\rca + \Sigma^0_2\ind$, but proving the complementation theorem or decidability of \mso requires much more~\cite{km:2016}.

We are thus led to the general question whether the entire theory of automata on infinite words requires exactly $\rca + \Sigma^0_2\ind$. This includes in particular the following issues:
\begin{itemize}
\item Does McNaughton's determinisation theorem imply $\Sigma^0_2\ind$ over $\rca$?
\item What about developing the  Wagner hierarchy (see~\cite[Chapter~V.6]{PinPerrin})?
\item Does $\rca + \Sigma^0_2\ind$ prove the uniformisation theorem for automata, in the form: for a given automaton $\Aa$ over the alphabet
$\{0,1\}^2$ such that $\forall X\,\exists Y\, (\text{$\Aa$ accepts $X\otimes Y$})$, there 
exists an automaton $\Bb$ such that $\forall X\,\exists! Y\, (\text{both $\Aa$ and $\Bb$ accept $X\otimes Y$})$ (see~\cite[Theorem~27]{rabinovich_decidable})?
\end{itemize}


\bibliographystyle{alpha}
\bibliography{bibliography}

\end{document}